\documentclass[aip,amsmath,amssymb,reprint]{revtex4-1}

\usepackage{graphicx}
\usepackage{dcolumn}
\usepackage{bm}
\usepackage[utf8]{inputenc}
\usepackage[T1]{fontenc}
\usepackage{mathptmx}
\usepackage{etoolbox}

\usepackage{amsthm}
\usepackage{booktabs}
\usepackage{hyperref}
\usepackage[dvipsnames]{xcolor}
\usepackage{multirow}
\usepackage{bbm}
\usepackage[normalem]{ulem}
\usepackage{units}
\usepackage{placeins}
\usepackage{algorithm}
\usepackage{algpseudocode}
\usepackage{float}
\usepackage{tikz}
\usepackage{array}

\makeatletter
\def\@email#1#2{%
\endgroup
\patchcmd{\titleblock@produce}
{\frontmatter@RRAPformat}
{\frontmatter@RRAPformat{\produce@RRAP{*#1\href{mailto:#2}{#2}}}\frontmatter@RRAPformat}
{}{}
}%
\makeatother

\newtheorem{proposition}{Proposition}
\newtheorem{definition}{Definition}
\newtheorem{corollary}{Corollary}
\newtheorem{remark}{Remark}

\newtheorem{thm}{Theorem}
\newtheorem{lemma}[thm]{Lemma}

\newcommand{\tcom}[1]{\textcolor{black}{#1}}
\begin{document}

\title{Ordinal Pattern Transition Networks as a Symbolic Substrate for Reservoir Computing}

\author{Max Potratzki}
\affiliation{Department of Epileptology, University Hospital Bonn, Bonn, Germany}
\affiliation{Helmholtz Institute for Radiation and Nuclear Physics, University of Bonn, Bonn, Germany}

\author{Thomas G. de Jong}
\affiliation{Division of Business \& Technology, Temple University, Japan Campus, 1-14-29 Taishido, Setagaya-ku, Tokyo 154-0004, Japan}
\affiliation{School of Mathematical Sciences, Hebei Normal University, 20 Nan’erhuan East Road, Yuhua District, Shijiazhuang,Hebei 050024, China}

\author{Michael Small}
\affiliation{Department of Mathematics and Statistics, The University of Western Australia, Perth, Australia}
\affiliation{UWA Data Institute, The University of Western Australia, Perth, Australia}

\author{Timo Br\"ohl}
\email{timo.broehl@ukbonn.de}
\affiliation{Department of Epileptology, University Hospital Bonn, Bonn, Germany}

\date{\today}

\begin{abstract}
Reservoir computing has established itself as a computationally efficient alternative to fully trained recurrent neural networks, using a fixed high-dimensional nonlinear substrate to map temporal input into a feature space for a simple readout.
Here we argue that reservoir computing can be realized entirely within a coarse-grained, symbolic state space, and that constrained random walks on Ordinal Pattern Transition Networks provide a natural instance of this realization, replacing the reservoir's continuous state space with a discrete and directly data-derived one.
We develop a formal correspondence between the two frameworks, show that the echo state property has a natural discrete analogue in the ergodicity and spectral gap of the walk's underlying Markov chain, and illustrate the correspondence with a minimal example and an empirical one-step-ahead prediction task evaluated across five paradigmatic model systems. These systems exhibit dynamics that correspond to strictly periodic, quasi-periodic, chaotic or stochastic signals.
\tcom{We argue that this symbolic reservoir perspective allows a redefinition of the reservoir substrate itself, replacing the requirement of a continuous, randomly generated dynamical system with any sufficiently ergodic, data-derived symbolic process. This opens a route toward interpretable, data-adaptive reservoirs whose internal structure is inferred rather than randomly assigned, with implications for surrogate data methods and reservoir design more broadly.}
\end{abstract}

\maketitle

\begin{quotation}
Reservoir computing, a form of machine learning designed for time-dependent data, uses a fixed dynamical system to transform an input signal into a set of internal features, after which only an output rule must be learned.
We show that this idea need not rely on a conventional and continuous-valued system.
A time series can instead be converted into a sequence of ordinal patterns -- local orderings of amplitude values.
Ordinal patterns and the observed transitions between them together define a network.
A random walk constrained by this data-derived network provides a discrete, symbolic analogue of a reservoir.
We establish conditions under which the walk progressively loses dependence on its initial state -- the symbolic counterpart of the fading-memory property required for reliable reservoir computing -- across systems ranging from periodic and quasiperiodic behavior to chaotic and stochastic dynamics.
This turns an ansatz used to describe complex signals into a computational substrate whose structure is interpretable and inferred directly from data rather than chosen at random, harboring potential for dynamics-informed reservoir design and for proxy- and surrogate-data generation.
\end{quotation}

\section{Introduction}

Reservoir computing (RC) conceptualizes temporal information processing as a three-stage architecture (cf. Fig.~\ref{fig:optn_esn_schematic} right): an input layer that injects a signal into a fixed nonlinear dynamical system (the reservoir), the reservoir itself, whose high-dimensional internal state nonlinearly encodes the signal's recent history, and a simple, typically linear readout that is the only trained component of the system~\cite{yan2024emerging}.






In its most common discrete-time instantiation, reservoir computing is realized as
\begin{equation}
\begin{aligned}
x(t+1) &= (1-\gamma)x(t) + \gamma f\big(Wx(t) + W^{(\text{in})}u(t) + b\big), \\
y(t) &= W^{(\text{out})}x(t),
\end{aligned}
\label{eq:rc_discrete}
\end{equation}
where $u(t)$ is the input, $x(t)$ the internal reservoir state, $y(t)$ the output, $W$ the fixed recurrent matrix, $W^{(\text{in})}$ the input-to-reservoir coupling, and $W^{(\text{out})}$ the trained readout~\cite{jaeger2002adaptive,lukovsevivcius2009reservoir,yan2024emerging}.
In practice, the reservoir state is typically augmented with a constant bias component so that the readout can represent affine mappings~\cite{lukovsevivcius2009reservoir}. 
We implicitly adopt this convention here: writing $x(t)$ for the reservoir state already includes a constant entry (e.g., $x(t) = [x_{\mathrm{res}}(t); 1]$), and $W^{(\text{out})}$ acts on this bias-augmented state. For brevity, we keep the compact notation $y(t) = W^{(\text{out})}x(t)$ throughout.
Overall this is the standard echo state network (ESN) formulation and serves here as our canonical discrete-time example.

More generally, the RC paradigm can be expressed as a coupled continuous-time dynamical system
\begin{equation}
\dot{x} = F(x, u, p), \qquad y = G(x, u, q),
\label{eq:rc_general}
\end{equation}
where $u$ is the input, $x$ the internal reservoir state, $y$ the output, $p$ the fixed reservoir parameters, and $q$ the trainable readout parameters~\cite{yan2024emerging,tanaka2019recent}. Equations~\eqref{eq:rc_discrete} and~\eqref{eq:rc_general} are two instances of the same reservoir-computing architecture: a fixed nonlinear dynamical substrate with fading memory, together with a separately trained readout.

The matrix $W$ -- random, fixed, and untrained -- defines the reservoir's internal coupling structure, $W^{(\text{in})}$ maps $u(t)$ into the reservoir state, and only $W^{(\text{out})}$ is optimized, typically via ridge regression.\cite{yan2024emerging} This architectural minimalism is precisely what makes RC computationally attractive relative to deep learning: no backpropagation is required, and the parameter count is orders of magnitude smaller than comparable deep recurrent networks~\cite{tanaka2019recent,yan2024emerging}.

For the purposes of this paper, we take a reservoir computer to require three architectural ingredients, independent of substrate: (i) a fixed, untrained, nonlinear state-transition map, (ii) a notion of fading memory / echo state property, and (iii) a separately trained readout mapping states.
This definition deliberately excludes physical or algorithmic details (recurrent-matrix construction, hardware substrate, or whether the update is deterministic or stochastic), and is broad enough to include Next Generation Reservoir Computing~\cite{gauthier2021next,cestnik2026next}, swarm RC~\cite{lymburn2021reservoir,lund2026state}, and physical reservoirs as special cases~\cite{nakajima2015information,tanaka2019recent,wang2025optical,de2025harnessing,de2025designing,ludge2026time}, while still being restrictive enough to exclude, e.g., arbitrary fixed feature maps lacking any temporal/history dependence.

The generality of Eqs.~\eqref{eq:rc_discrete}--\eqref{eq:rc_general} raises a natural question: must the reservoir be a continuous-valued dynamical system realized by a random matrix or a physical substrate, or can other classes of fixed, nonlinear, high-dimensional transformations serve equally well as reservoirs?
Recent theoretical work has already extended RC substrates toward ``structured graphs with random as well as non-random weights''~\cite{yan2024emerging}, explored reservoirs of specific topologies~\cite{kawai2019small}, and emphasized nonlinear dynamics, physical
realizations, and reservoir topology as central design considerations in reservoir computing~\cite{amann2026nonlinear}.
Yet almost all existing approaches -- whether the topology is random, layered, or task-optimized -- retain a continuous state vector evolving under a differentiable nonlinearity.
Recent work on stochastic reservoir computers and on stochastic state-space systems has further broadened this picture, modeling reservoirs as controlled Markov processes with probabilistic state transitions and providing a unified treatment of echo-state and fading-memory properties in both deterministic and stochastic settings.\cite{ehlers2025stochastic,ortega2026stochastic}.

Here we propose and formally develop an alternative: a \emph{symbolic reservoir} built directly from the ordinal (rank-order) structure of an observed time series, in which the fixed substrate is not a randomly generated matrix but an \emph{empirically inferred} transition network, and the reservoir's evolution is not a continuous nonlinear map but a \emph{constrained stochastic walk} over a finite symbolic state space. 
We refer to this substrate as an Ordinal Pattern Transition Network (OPTN)~\cite{small2013complex}. OPTNs lie within the broader ordinal-methods framework, which uses rank-order representations to investigate the structure and dynamics of complex systems and time series\tcom{~\cite{amigo2023ordinal,voltarelli2024characterizing}}. 
Recent research has established that OPTNs, derived from suitable observables of complex
dynamical systems---both model and real-world---encode essential dynamical properties of the underlying system~\cite{kulp2016using,yang2020multivariate,olivares2020contrasting,huang2021characterizing, potratzki2026quantifying,leo2026ordinal}.

Our contribution in this perspective is conceptual and structural rather than methodological: we do not present a new algorithm for challenges usually tackled with reservoir computing, but rather argue that reservoir computing's defining computational paradigm does not require a continuous state space, and that OPTN-based \emph{"random"} walks provide a natural symbolic instance of it, satisfying, almost without modification, the formal requirements of a reservoir computer. What distinguishes this construction from an ordinary higher-order symbolic Markov predictor is not the update rule itself -- at the level of a single prediction step, the two coincide -- but the fact that the OPTN walk satisfies the Echo State Property (ESP) and fading-memory criteria (Section~\ref{sec:fading_memory}) that formally define RC as a computational paradigm, independent of whether the underlying state space is continuous or symbolic.
In this sense, our OPTN-based walk can be viewed as a specific, data-derived symbolic instance of the broader class of stochastic reservoir computers and Markov-chain-based reservoirs~\cite{grigoryeva2018universal,gonon2019reservoir,ortega2026stochastic}, with the additional structure that its transition topology is inferred from ordinal statistics of a single observed time series rather than parametrically specified or randomly initialized.
We believe this recognition is valuable independent of any specific application, because it (i) imports RC's mature theoretical vocabulary -- echo state property, fading memory, memory capacity -- into the ordinal-pattern and symbolic-dynamics literature, and (ii) suggests that data-derived, non-random reservoirs constitute an underexplored point in RC's design space.

\section{Minimal Formal Setup}

\subsection{Ordinal patterns}

Given a scalar time series $\{s_1, s_2, s_3, \dots\}$, the ordinal pattern of order $d$ at time $t$ is the permutation $\pi_t \in S_d$ (the symmetric group on $d$ elements) describing the relative rank order of the $d$ values $\{s_t, s_{t+\tau}, \dots, s_{t+(d-1)\tau}\}$~\citep{bandt2002permutation}.
Each of the $d!$ possible permutations defines a distinct ordinal pattern, which serves as a symbol in the resulting symbolic description. The mapping from a real-valued signal to the ordinal-pattern sequence $\{\pi_t\}$ discards amplitude information while retaining the local ordinal (shape) structure of the dynamics.
Throughout, we use \emph{ordinal pattern} when referring to a specific permutation $\pi_t\in S_d$, and \emph{symbolic} when referring more generally to the discrete representation or dynamics induced by the ordinal encoding.

\subsection{The ordinal pattern transition network}

The OPTN is the weighted directed graph on $d!$ nodes, each associated with an ordinal pattern, whose edge weight $A_{ij}$ counts (or estimates the probability of) observing ordinal pattern $j$ immediately following ordinal pattern
$i$ in the empirical ordinal-pattern sequence~\citep{mccullough2015time}.
Row-normalizing $A$ yields a transition probability matrix $T$, and the pair $(A,T)$ constitutes a complete, data-derived, fixed nonlinear structure: fixed because it is built once from a training signal and not subsequently updated; nonlinear because the mapping from real-valued input to symbolic transitions is inherently non-invertible and rank-based rather than linear.

\subsection{A constrained random walk on the OPTN}
\label{sec:walk}

A random walk on the OPTN generates a new ordinal-pattern sequence
$\{\hat\pi_t\}$ by sampling successor states according to $T$, subject
to an ordinal-compatibility constraint~\citep{sakellariou2019markov}.
The walk advances through pattern positions one step at a time, i.e.,
$\pi_t \to \pi_{t+1} \to \pi_{t+2} \to \cdots$; because the OPTN
transition matrix $T$ is estimated empirically from adjacent-by-one
pattern pairs, the edge $\pi_t \to \pi_{t+1}$ captures no raw-sample
overlap by itself when $\tau>1$: adjacent-by-one pattern pairs $\pi_t$
and $\pi_{t+1}$ generally share no samples at all in that case. The
additional logical constraint instead comes from patterns separated
from the candidate by a multiple of $\tau$: since two patterns
separated by $\tau$ share exactly $d-1$ overlapping sample values, the
legality of $\pi_{t+1}$ is constrained by the $\tau$-shifted patterns
\[
\big(\pi_{t+1-(d-1)\tau},\ \dots,\ \pi_{t+1-2\tau},\ \pi_{t+1-\tau}\big),
\]
a tuple of $d-1$ patterns already generated at earlier walk steps.
Because each subsequent step $t+2, t+3,\dots$ requires a \emph{different}
$\tau$-shifted sub-selection from this same neighborhood, no fixed
$\tau$-spaced tuple can serve as an incrementally updatable state
across multiple steps. What does suffice is a \emph{contiguous} window
of the $h=(d-1)\tau$ most recently generated patterns,
\[
\big(\pi_{t+1-(d-1)\tau},\ \dots,\ \pi_t\big),
\]
which we call the \emph{compatibility state} at step $t$ (governing
the transition to $t+1$) \tcom{(see Fig.~\ref{fig:optn_esn_schematic}, right panel, for a graphical illustration of this overlap-induced compatibility constraint for $d = 3$, $\tau = 5$).
It contains every $\tau$-shifted sub-tuple that any future step within this window will need to check compatibility against, and it updates by simply dropping its oldest entry and appending the newly generated pattern as the walk advances.}

For a compatibility state
\[
\sigma_t = \big(\pi_{t-h+1},\ldots,\pi_t\big),
\]
let $\mathcal{C}(\sigma_t)$ denote the set of ordinal patterns $j$
that are both reachable from the current terminal pattern $\pi_t$ in
the OPTN and compatible with the $\tau$-shifted sub-tuples encoded by
$\sigma_t$. The constrained walk samples its next pattern from the
conditional distribution
\begin{equation}
\Pr\!\big(\pi_{t+1}=j \mid \sigma_t\big)
=
\frac{T_{\pi_t j}}
{\sum_{\ell\in\mathcal{C}(\sigma_t)} T_{\pi_t \ell}},
\qquad
j\in\mathcal{C}(\sigma_t),
\label{eq:compatibility_conditioned_transition}
\end{equation}
and assigns probability zero to $j\notin\mathcal{C}(\sigma_t)$.
Thus, the original pattern-level transition matrix $T$ supplies the
relative successor weights, while the compatibility constraint
conditions these weights on the set of legally admissible
successors. In what follows, we restrict attention to states for
which $\sum_{\ell\in\mathcal{C}(\sigma_t)}T_{\pi_t\ell}>0$.

We emphasize that this compatibility state -- a contiguous block of
$h=(d-1)\tau$ ordinal patterns -- is a distinct object from the
\emph{seed sequence} needed to initialize a free-running walk from an
observed time series (Table~\ref{tab:correspondence}), which must
additionally supply enough raw samples to compute this initial block of ordinal patterns in the first place.

The walk is therefore not first-order Markov on the space of single
patterns; rather, its natural state is this compatibility-state
window, and it becomes first-order Markov once formulated on this
enlarged state space (made precise in
Definition~\ref{def:optn_chain}, Section~\ref{sec:fading_memory}).
This is the generic object of comparison for the remainder of this
paper; we deliberately do not specify the amplitude-reconstruction
step that converts the symbolic walk back into a scalar series.

\section{Central Correspondence: OPTN-Based Random Walk vs. RC}
\label{sec:central_correspondence}

We now state the mapping between Eq.~\eqref{eq:rc_discrete} and the OPTN random-walk explicitly (Table~\ref{tab:correspondence} and Figure~\ref{fig:optn_esn_schematic}).

\begin{table*}[htbp]
\centering
\caption{Correspondence between the standard discrete-time echo state network (ESN) formulation of reservoir computing in Eq.~\eqref{eq:rc_discrete} and constrained OPTN random walks.
The ESN / RC column lists the components of this canonical discrete-time reservoir; other RC substrates (continuous-time, physical, stochastic) share the same architectural roles but differ in implementation.}
\label{tab:correspondence}
{\renewcommand{\arraystretch}{1.15}
\begin{tabular}{@{}lll@{}}
\toprule
\textbf{RC/ESN component (Eq.~\eqref{eq:rc_discrete})} & \textbf{OPTN walk analogue} & \textbf{How OPTN differs} \\
\midrule
\raggedright\parbox[t]{4.5cm}{\raggedright Fixed reservoir matrix $W$} & \parbox[t]{6.5cm}{\raggedright OPTN transition matrix $T$ (or count matrix $A$)} & \parbox[t]{5.5cm}{\raggedright Data-derived rather than randomly generated} \\
\parbox[t]{4.5cm}{\raggedright Internal state $x(t)\in\mathbb{R}^{D_r}$} & \parbox[t]{6.5cm}{\raggedright Current compatibility state: the contiguous window of the $h=(d-1)\tau$ most recently generated ordinal patterns that constrains the legal successors (Definition~\ref{def:optn_chain})} & \parbox[t]{5.5cm}{\raggedright Discrete, symbolic state rather than continuous vector} \\
\parbox[t]{4.5cm}{\raggedright Nonlinear update $f(Wx+W^{(\text{in})}u+b)$} & \parbox[t]{6.5cm}{\raggedright Compatibility-conditioned stochastic transition: candidate successors restricted by OPTN edges and ordinal compatibility, with their transition weights renormalized over the legal successor set} & \parbox[t]{5.5cm}{\raggedright Stochastic sampling replaces deterministic nonlinear activation} \\
\parbox[t]{4.5cm}{\raggedright Driven (listening / washout) phase} & \parbox[t]{6.5cm}{\raggedright Seed sequence drawn from the real signal, of sufficient length to compute the initial compatibility-state window of $h=(d-1)\tau$ ordinal patterns} & \parbox[t]{5.5cm}{\raggedright Functions as washout / synchronization transient} \\
\parbox[t]{4.5cm}{\raggedright Autonomous / closed-loop phase (digital twin)} & \parbox[t]{6.5cm}{\raggedright Free-running walk over compatibility states after the seed is consumed} & \parbox[t]{5.5cm}{\raggedright No external input required, exactly as required of an RC-based dynamical twin \cite{kong2023reservoir}} \\
\parbox[t]{4.5cm}{\raggedright Trainable readout $W^{(\text{out})}$} & \parbox[t]{6.5cm}{\raggedright A fixed, parameter-free maximum-likelihood mapping (Section~\ref{sec:pred_rule}) from compatibility state to predicted successor ordinal pattern} & \parbox[t]{4.5cm}{\raggedright A first, minimal instance of a symbolic readout; trained parametric readouts remain an open extension (Section~\ref{sec:discussion})} \\
\parbox[t]{4.5cm}{\raggedright Reservoir dimension $D_r$} & \parbox[t]{6.5cm}{\raggedright Compatibility state-space size $\Sigma_h \subseteq \{1,\dots,d!\}^{h} \leq (d!)^{h}$, with $h=(d-1)\tau$} & \parbox[t]{5.5cm}{\raggedright Grows combinatorially with embedding dimension $d$ and delay $\tau$; strict inequality reflects compatibility-inadmissible states} \\
\bottomrule
\end{tabular}
}
\end{table*}

\begin{figure*}[htbp]
\centering \includegraphics[width=2\columnwidth]{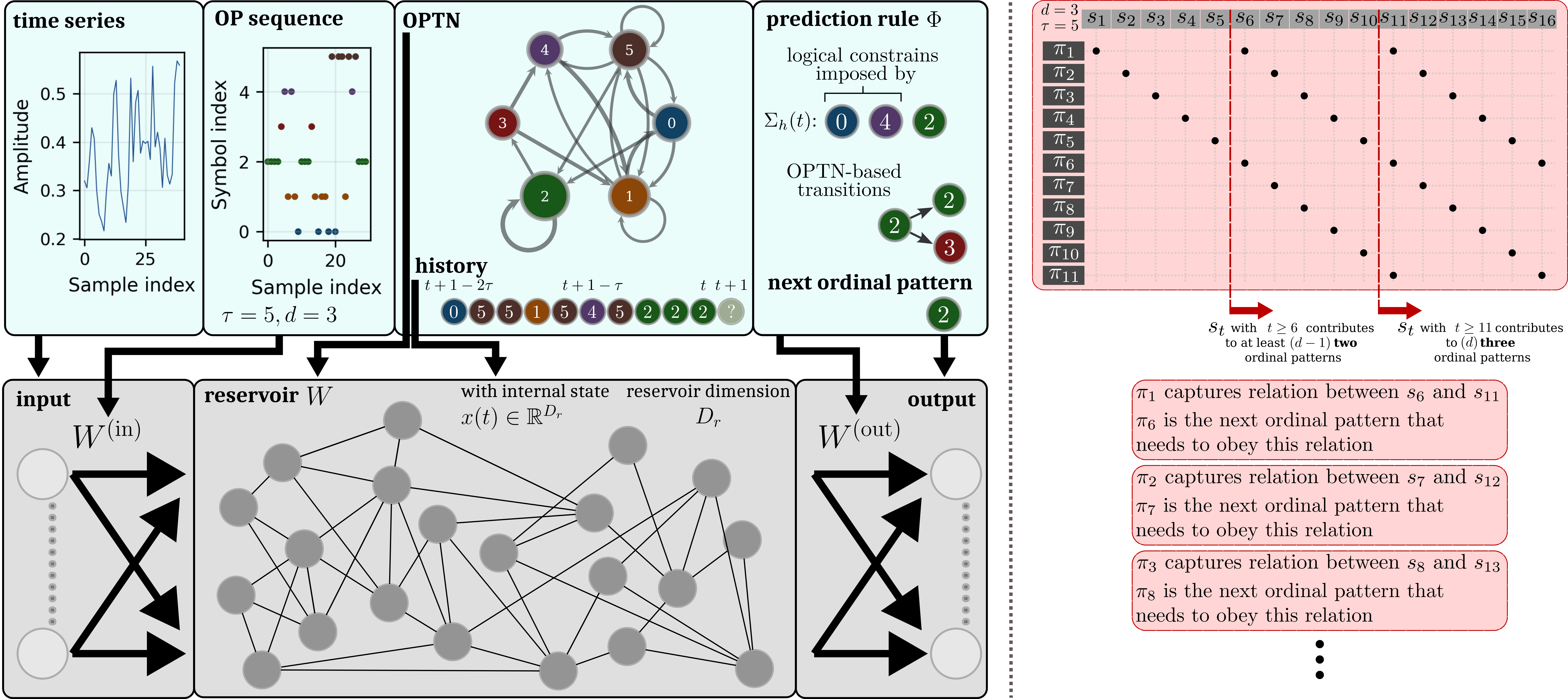}
\caption{{\bf Left panel:} Conceptual correspondence between the OPTN
symbolic walk (top left) and a classical echo state network (bottom
left). Top left: a time series is encoded into ordinal patterns, which
populate the OPTN; the compatibility state $\Sigma_h(t)$
(Definition~\ref{def:optn_chain}) and prediction rule $\Phi$
(Section~\ref{sec:pred_rule}) determine the next ordinal pattern. Bottom right:
the analogous reservoir computing pipeline, where the internal state
$x(t)$ and readout $W^{(\text{out})}$ produce the output. Arrows indicate the
correspondences of Table~\ref{tab:correspondence}: $T \leftrightarrow
W$ and $\Sigma_h(t) \leftrightarrow x(t)$. These are structural
analogies, not exact equivalences (Section~\ref{sec:partial}).
{\bf Right panel:} Illustration of the ordinal-compatibility constraint
underlying Section~\ref{sec:walk} for $d=3$, $\tau=5$. Each
ordinal pattern $\pi_i$ is computed from $d$ samples spaced $\tau$
apart (e.g., $\pi_1$ from $s_1, s_6, s_{11}$); because patterns
separated by $\tau$ share $d-1$ overlapping samples, a given sample
$s_t$ generally contributes to several patterns simultaneously, and
the values it has already fixed in earlier patterns constrain which
ordinal patterns can legally follow. Once $t \geq (d-1)\tau + 1$
(here $t \geq 11$), each new sample contributes to $d$ patterns
still under construction, so the next pattern is not free to take any
of the $d!$ possible values, but must remain consistent with the
rank-order relations already fixed by the preceding $\tau$-shifted
patterns. This overlap-induced dependency is the source of the
compatibility constraint restricting admissible successor patterns in
the OPTN walk (Section~\ref{sec:walk}); the full contiguous window of
$h=(d-1)\tau$ patterns needed to track this dependency across
successive walk steps motivates the compatibility-augmented state
space $\Sigma_h$ of Definition~\ref{def:optn_chain}.}
\label{fig:optn_esn_schematic}
\end{figure*}

Two aspects of this correspondence merit emphasis. 
First, the OPTN walk still requires two user-specified hyperparameters, the embedding dimension $d$ and delay $\tau$, which jointly determine the resolution and admissible-state-space size $\Sigma_h$ of the substrate. Unlike continuous RC, however, these hyperparameters do not need to be tuned against a stability criterion: there is no analogue of a spectral-radius or leak-rate search~\cite{jaeger2002adaptive,jaeger2007optimization}, since the walk's transition structure $T_h$ is read off directly from the data once $(d,\tau)$ are fixed, rather than requiring a separate random-matrix-generation and validation step to ensure the ESP holds. 
What \emph{is} still required for the discrete echo-state property to hold are the usual graph-level conditions of the induced Markov chain on compatibility states (Definition~\ref{def:optn_chain}): irreducibility and aperiodicity of $T_h$ on the closed communicating class $\Sigma_h$. In practice, these conditions correspond to strong
connectivity and the absence of perfect periodic cycles on the
compatibility-state graph.
When these conditions fail -- e.g., for strictly periodic or constant signals inducing a periodic symbolic chain -- the walk does not have fading memory (cf. Proposition~\ref{prop:esp}), but its predictability is then governed by the trivial periodic structure rather than by spectral-gap mixing.
Second, the driven/autonomous distinction that Kong et al. identify as the defining requirement of RC-based digital twins -- ``the digital twin must be able to evolve its dynamical state at the present time to the next time step without further state input''~\cite{kong2023reservoir} -- is satisfied by construction once the walk's seed is consumed, without any additional architectural machinery.

\section{Fading Memory as Ergodicity}
\label{sec:fading_memory}

The echo state property is defined as the condition under which a reservoir's asymptotic state depends only on the input sequence and not on the initial network state, a condition closely tied to the fading memory property in which the current state depends predominantly on recent, rather than distant, history \cite{yan2024emerging}.
For continuous RC, Jaeger's original sufficient condition requires the largest singular value of $W$ to be less than one, with the spectral radius providing a necessary condition ruling out asymptotically unstable reservoirs \cite{yan2024emerging}.

We now formalize the discrete analogue of this property for OPTN-based walks.
\tcom{Intuitively, the compatibility state introduced below is the minimal window of recently generated ordinal patterns required to determine which successor patterns remain legally admissible. }

\begin{definition}[OPTN Markov chain on compatibility-augmented states]
\label{def:optn_chain}

Let $h=(d-1)\tau$ denote the length of the contiguous window of
ordinal patterns relevant to the compatibility constraint
(Section~\ref{sec:walk}). A compatibility state is a contiguous tuple
of $h$ ordinal patterns,
\[
\sigma
=
\big(\pi_{t-h+1},\pi_{t-h+2},\ldots,\pi_t\big),
\]
whose $\tau$-shifted sub-tuples satisfy the overlapping-window
constraint of Section~\ref{sec:walk}.

For a compatibility state
\[
\sigma
=
\big(\pi_{t-h+1},\ldots,\pi_t\big),
\]
let
\[
i(\sigma)=\pi_t
\]
denote its terminal ordinal pattern, and write
$\mathcal{C}(\sigma)$ for the set of graph-reachable and ordinally
compatible successor patterns defined in Section~\ref{sec:walk}.
By the restriction stated there, only states for which the
normalizing denominator below is positive are considered.

The OPTN-constrained walk induces a discrete-time stochastic process
$\{X_n\}_{n\geq0}$, where $X_n$ is the compatibility state after the
$n$-th walk step. Its transition probabilities are defined by the
compatibility-conditioned transition matrix $T_h$:
\begin{equation}
\begin{aligned}
(T_h)_{\sigma\sigma'}
&=
\Pr(X_{n+1}=\sigma'\mid X_n=\sigma)
\\
&=
\begin{cases}
\displaystyle
\frac{T_{i(\sigma)j}}
{\sum_{\ell\in\mathcal{C}(\sigma)}T_{i(\sigma)\ell}},
&
\substack{
j\in\mathcal{C}(\sigma),\\
\sigma'=\big(\pi_{t-h+2},\ldots,\pi_t,j\big),
}
\\[1.4em]
0,
&
\text{otherwise}.
\end{cases}
\end{aligned}
\label{eq:induced_transition_matrix}
\end{equation}

Thus, at each step the walk drops the oldest pattern in the current
compatibility state and appends an admissible successor pattern $j$.
The original pattern-level transition matrix $T$ supplies the
relative successor weights, while the compatibility constraint
renormalizes these weights over the legal successor set
$\mathcal{C}(\sigma)$. Consequently, every row of $T_h$ sums to one
on the retained state space.

For the ergodic analysis below, we further restrict attention to a
closed communicating class of these states, still denoted by
$\Sigma_h$. On this class, $T_h$ is a finite row-stochastic transition
matrix.

Crucially, the transition probability from $\sigma$ to $\sigma'$ is
determined entirely by the current compatibility state $\sigma$, the
candidate successor state $\sigma'$, and the fixed pattern-level
matrix $T$. It makes no reference to states visited prior to
$\sigma$. The compatibility state therefore records, in a contiguous
and incrementally updatable form, all information needed to determine
the admissible successor states and their transition probabilities.

The induced stochastic process on $\Sigma_h$ is shown in
Lemma~\ref{lem:markov_property} to be a first-order homogeneous
Markov chain. We write $k=|\Sigma_h|$ for the number of states in the
closed communicating class considered below.

\end{definition}

\begin{remark}[Compatibility conditioning]
\label{rem:compatibility_conditioning}
The pattern-level matrix $T$ is row-stochastic before ordinal
compatibility is imposed. Compatibility may exclude some
graph-reachable successors of a given terminal pattern; therefore,
the induced history-state transition matrix $T_h$ is obtained by
conditioning the outgoing weights of $T$ on the legal successor set
$\mathcal{C}(\sigma)$. This conditioning is necessary for $T_h$ to
remain row-stochastic and hence to define a Markov chain on
compatibility states.
\end{remark}

Our ergodic treatment of the OPTN-induced Markov chain builds on the Markovian reformulation of ordinal partition networks introduced by
Sakellariou, Stemler \& Small~\cite{sakellariou2019markov}, who showed that recording transitions between ordinal patterns approximates the Perron-Frobenius operator in a higher-dimensional space topologically equivalent to the original state space, and who first formalized the observation -- central also to our Section~\ref{sec:walk} -- that this
symbolic chain is first-order Markov only once the pattern overlap satisfies $w \geq m$, and is otherwise a higher-order chain of order $m-w$ requiring a compatibility-augmented state description.
Their analysis uses this Markovian framework to connect network connectivity patterns to indicators of the underlying dynamics, most notably the hierarchy of unstable periodic orbits embedded within a chaotic attractor, and to derive practical guidance for choosing the pattern length and embedding lag.
The present work adopts the same compatibility-augmented Markov construction but directs it toward a different end: rather than using the chain's stationary and spectral properties to characterize or diagnose the
generating dynamics, we use the chain's ergodicity and spectral gap
(Proposition~\ref{prop:esp}, Corollary~\ref{cor:mixing_time}) to establish a formal, architectural correspondence with reservoir computing -- specifically, as the discrete analogue of the spectral-radius condition underlying the Echo State Property.
In this sense, the two contributions are complementary rather than competing: theirs establishes the ergodic-theoretic foundations of the ordinal Markov chain and its relationship to the source dynamics, while ours repurposes that same foundation to argue that the resulting symbolic substrate satisfies the formal requirements of a reservoir computer.

It is worth being explicit about why irreducibility of the compatibility-augmented chain, on its own, is not sufficient to establish this correspondence. 
First, irreducibility alone does not exclude periodicity: a periodic irreducible chain still possesses a unique stationary distribution $\mu$, but the distribution $\nu_0 T_h^n$ at step $n$, started from an arbitrary initial distribution $\nu_0$ over $\Sigma_h$, need not converge to $\mu$ at all -- it can cycle indefinitely among a fixed set of distributions, a behavior manifestly incompatible with a fading-memory property, in which the influence of the seed history must vanish over time rather than persist as sustained oscillation.
Aperiodicity is therefore required as an independent hypothesis, not a redundant strengthening of irreducibility. Second, even irreducibility together with aperiodicity only guarantees qualitative convergence, $\nu_0 T_h^n \to \mu$, via the standard ergodic theorem for Markov chains; it provides no statement about the \emph{rate} of this convergence. Because our fading-memory claim -- and, more concretely, the symbolic mixing-time diagnostic of Corollary~\ref{cor:mixing_time} -- requires a quantitative decay rate rather than mere eventual convergence, we additionally need the spectral gap of $T_h$ and, in the non-diagonalizable case, the Jordan-block structure governing sub-dominant eigenvalue decay. This is the source of the additional technical machinery in the proof of Proposition~\ref{prop:esp} below, relative to a bare irreducibility argument.

\begin{lemma}[Markov property of the compatibility state process]
\label{lem:markov_property}
Let $\Sigma_h$ denote the closed communicating class of admissible
compatibility states specified in Definition~\ref{def:optn_chain}, and
let
\[
X_n \in \Sigma_h
\]
be the compatibility state after walk step $n$ (Definition~\ref{def:optn_chain}).
Then the stochastic process $\{X_n\}_{n\ge0}$ is a first-order homogeneous Markov chain on $\Sigma_h$, i.e.,
\[
\Pr(X_{n+1}=x' \mid X_n=x, X_{n-1},\ldots,X_0)
=
\Pr(X_{n+1}=x' \mid X_n=x)
\]
for all admissible states $x,x'\in\Sigma_h$.
\end{lemma}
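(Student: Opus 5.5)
The argument works directly from the generative description of the walk in Section~\ref{sec:walk}, not from $T_h$ viewed abstractly. The central observation is that $X_{n+1}$ is a deterministic function of $X_n$ and one newly sampled pattern. Write $X_n=\sigma=(\pi_{t-h+1},\ldots,\pi_t)$. The walk draws $\pi_{t+1}=j$ from the distribution in Eq.~\eqref{eq:compatibility_conditioned_transition} and sets $X_{n+1}=\mathrm{shift}(\sigma,j)=(\pi_{t-h+2},\ldots,\pi_t,j)$. The shift map is injective in $j$ once $\sigma$ is fixed, because $j$ is the last entry of $X_{n+1}$. Hence, for any $x'$,
\[
\{X_{n+1}=x'\}=\{\pi_{t+1}=j\}
\]
holds when $x'=\mathrm{shift}(\sigma,j)$ for some $j$, and the event is empty otherwise.

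The key step is to check that the sampling law of $\pi_{t+1}$, conditional on the entire past $X_0,\ldots,X_n$, depends only on $X_n$. Three ingredients are involved.
\begin{itemize}
\item The numerator $T_{\pi_t j}$ uses only the terminal pattern $i(\sigma)=\pi_t$, and $T$ is fixed.
\item The admissible set $\mathcal{C}(\sigma)$ requires reachability from $\pi_t$ and compatibility with the $\tau$-shifted sub-tuple $(\pi_{t+1-(d-1)\tau},\ldots,\pi_{t+1-\tau})$. Every index in that sub-tuple lies in $\{t-h+1,\ldots,t\}$, since $t+1-(d-1)\tau=t-h+1$.
\item The normalizing denominator is a sum over $\mathcal{C}(\sigma)$, so it is also a function of $\sigma$ alone.
\end{itemize}
Consequently the walk's conditional law of $\pi_{t+1}$ given $(X_0,\ldots,X_n)$ equals a kernel $q(\cdot\mid\sigma)$. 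Combining this with the event identity above gives
\[
\Pr(X_{n+1}=x'\mid X_n=\sigma,X_{n-1},\ldots,X_0)=(T_h)_{\sigma x'},
\]
which is Eq.~\eqref{eq:induced_transition_matrix}. Since this right-hand side does not involve $X_{n-1},\ldots,X_0$, taking conditional expectations yields $\Pr(X_{n+1}=x'\mid X_n=\sigma)=(T_h)_{\sigma x'}$ as well. Equality of the two conditionals is then immediate.

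Homogeneity follows because neither $T$ nor the compatibility rule depends on $n$, so $(T_h)_{\sigma x'}$ is the same at every step. We also need the process to stay inside $\Sigma_h$. Positivity of the denominator, by the standing restriction, makes $q(\cdot\mid\sigma)$ a probability distribution. Closedness of the communicating class ensures $X_{n+1}\in\Sigma_h$ almost surely. Together with Remark~\ref{rem:compatibility_conditioning}, this shows $T_h$ is row-stochastic on $\Sigma_h$, so the chain is well defined. Finally, we restrict to histories of positive probability so that the conditional probabilities make sense.

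The main obstacle is the index bookkeeping in the compatibility step. We must make sure that compatibility of $\pi_{t+1}$ only ever consults patterns at lags $\tau,2\tau,\ldots,(d-1)\tau$. We must also make sure the newly formed state $X_{n+1}$ is again admissible, meaning its own $\tau$-shifted sub-tuples are consistent. This follows because those sub-tuples either were already inside $\sigma$ (and are consistent by admissibility of $\sigma$) or involve $j$, and $j$ was chosen compatible precisely with them. We would verify this carefully by splitting the sub-tuples of $X_{n+1}$ into those that contain the new entry $j$ and those that do not.
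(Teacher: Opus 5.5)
Your proposal is correct and follows essentially the same route as the paper. Both arguments rest on one observation: the terminal pattern, the admissible set $\mathcal{C}(\sigma)$ (all of whose $\tau$-shifted sub-tuples lie inside the contiguous window) and the normalization are functions of the current state alone, and homogeneity follows because $T$ and the compatibility rule do not depend on $n$. Your version is more explicit than the paper's in three places it leaves implicit: you derive the conditional law from the generative sampling rule via injectivity of the shift, you check the index identity $t+1-(d-1)\tau=t-h+1$, and you treat closure of $\Sigma_h$ and admissibility of the successor state.
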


\begin{proof}
By Definition~\ref{def:optn_chain}, the transition rule $(T_h)_{\sigma\sigma'}$ is defined purely as a function of the ordered pair $(\sigma,\sigma')$: $\sigma'$ must be a valid one-step shift-and-append extension of $\sigma$, and its probability is obtained by conditionally normalizing the outgoing weight from the terminal pattern of $\sigma$ over the compatibility-admissible successor set $\mathcal{C}(\sigma)$. Both the admissibility decision and the normalization depend only on $\sigma$, $\sigma'$, and the
fixed pattern-level matrix $T$, with no reference to any state visited prior to $\sigma$. 
Because $\sigma$ is a contiguous window containing every $\tau$-shifted sub-tuple needed to evaluate the compatibility of any candidate successor pattern (Section~\ref{sec:walk}), every admissible successor state and its associated transition probability are therefore determined uniquely by the current compatibility state, independent of how $\sigma$ itself was reached. Therefore,
\[
\Pr(X_{n+1}\mid X_n,X_{n-1},\ldots)
=
\Pr(X_{n+1}\mid X_n),
\]
which is precisely the Markov property.
Since the transition probabilities do not depend explicitly on the walk step $n$, the chain is homogeneous.
\end{proof}

\begin{proposition}[Discrete fading memory via ergodicity]
\label{prop:esp}
By Lemma~\ref{lem:markov_property}, the compatibility state process
$\{X_n\}_{n\geq0}$ forms a finite homogeneous Markov chain on the
closed communicating class $\Sigma_h$ specified in
Definition~\ref{def:optn_chain}. Assume that this chain is
irreducible and aperiodic.
Then there exists a unique stationary distribution $\mu$ on $\Sigma_h$, and for any two initial \emph{distributions} $\nu_0, \nu_0'$ over $\Sigma_h$ (e.g., corresponding to two distinct one-shot seed compatibility states $X_0=\sigma$ versus $X_0=\sigma'$, so that $\nu_0=e_\sigma$ and $\nu_0'=e_{\sigma'}$ are point masses rather than the walk's realized trajectory itself),
\begin{equation}
\big\|\nu_0 T_h^n - \nu_0' T_h^n\big\|_{TV} \;\leq\; C\, n^{\,m^\star-1}\,\lambda_\star^{\,n}
\label{eq:mixing_bound}
\end{equation}
for a constant $C = \sqrt{k}\,\kappa(P)$, where $\lambda_\star = \max_{i\geq 2}|\lambda_i|<1$ is the second-largest eigenvalue modulus of $T_h$, $m^\star$ is the size of the largest Jordan block among the sub-dominant eigenvalues, and $\kappa(P)=\|P\|\|P^{-1}\|$ is the condition number of the eigenvector matrix in the Jordan decomposition $T_h=PJP^{-1}$.
Here the right eigenvector of $T_h$ associated with the leading eigenvalue $\lambda_1=1$ is the all-ones vector $\mathbf{1}$ (since $T_h$ is row-stochastic), and the corresponding left eigenvector is the stationary distribution $\mu$, so that $\mathbf{1}\mu$ denotes the rank-one projector onto this leading mode.
When $T_h$ is diagonalizable (the generic case for an empirically estimated matrix with distinct eigenvalues), $m^\star=1$ and the bound reduces to a geometric decay $C\lambda_\star^n$.
Consequently, the walk's state distribution at time $n$ becomes asymptotically independent of its initial (seed) compatibility state at a rate governed by the spectral gap $\gamma=1-\lambda_\star$.
\end{proposition}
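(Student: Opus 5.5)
First, the unique stationary distribution $\mu$ follows from Lemma~\ref{lem:markov_property} together with irreducibility. By Perron--Frobenius for the finite irreducible row-stochastic matrix $T_h$, the eigenvalue $\lambda_1=1$ is simple, with right eigenvector $\mathbf{1}$ and a unique positive left eigenvector normalized to $\mu$. Aperiodicity then makes $T_h$ primitive, so every other eigenvalue satisfies $|\lambda_i|<1$, hence $\lambda_\star<1$. This gives the spectral input we need without further work.

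Next we remove the leading mode. Let $\delta_0=\nu_0-\nu_0'$. Both are probability vectors, so $\delta_0\mathbf{1}=0$, and hence $\delta_0 T_h^n=\delta_0(T_h^n-\mathbf{1}\mu)$. We write the Jordan decomposition $T_h=PJP^{-1}$ with $J=\mathrm{diag}(1,J_2,\ldots)$, where the first column of $P$ is $\mathbf{1}$ and the first row of $P^{-1}$ is $\mu$. Then $T_h^n-\mathbf{1}\mu=P\,\widetilde J^{\,n}P^{-1}$, where $\widetilde J$ is $J$ with its $1\times1$ leading block replaced by $0$. Consequently
\[
\|\delta_0 T_h^n\|_2\le \|\delta_0\|_2\,\|P\|\,\|P^{-1}\|\,\|\widetilde J^{\,n}\|_2 .
\]

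The main step, and the one I expect to be the real obstacle, is bounding $\|\widetilde J^{\,n}\|_2$. For a Jordan block $J_m(\lambda)=\lambda I+N$ with $m\le m^\star$ and $|\lambda|\le\lambda_\star$, the binomial expansion gives
\[
J_m(\lambda)^n=\sum_{r=0}^{m-1}\binom{n}{r}\lambda^{n-r}N^r .
\]
Since $\|N\|\le1$ and $\binom{n}{r}\le n^r$, this yields $\|J_m(\lambda)^n\|\le \sum_{r<m} n^r\lambda_\star^{n-r}$. This sum is bounded by $c\,n^{m^\star-1}\lambda_\star^n$ only with a constant $c$ depending on $m^\star$ and $\lambda_\star^{-(m^\star-1)}$, and it degenerates if $\lambda_\star=0$. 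To reach exactly the stated $C=\sqrt{k}\,\kappa(P)$, I would first try to absorb these factors into $P$ by rescaling the Jordan basis with $\mathrm{diag}(\epsilon^{-j})$, which replaces $N$ by $\epsilon N$. Failing that, I would state the constant as $\sqrt{k}\,\kappa(P)\,c(m^\star,\lambda_\star)$ and note that it equals $\sqrt{k}\,\kappa(P)$ in the diagonalizable case, where $\widetilde J^{\,n}$ is diagonal with norm at most $\lambda_\star^n$.

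Finally, we convert to total variation using $\|\delta\|_{TV}=\tfrac12\|\delta\|_1\le\tfrac12\sqrt{k}\,\|\delta\|_2$ by Cauchy--Schwarz on $k$ coordinates, together with $\|\delta_0\|_2\le\|\delta_0\|_1\le 2$. Combining these gives
\[
\|\nu_0T_h^n-\nu_0'T_h^n\|_{TV}\le\sqrt{k}\,\kappa(P)\,\|\widetilde J^{\,n}\|_2 .
\]
Substituting the Jordan bound yields \eqref{eq:mixing_bound}, and in the case $m^\star=1$ it gives the geometric rate $C\lambda_\star^n$. Taking $\nu_0=e_\sigma$ and $\nu_0'=e_{\sigma'}$ gives the seed-independence conclusion, with rate governed by the spectral gap $\gamma=1-\lambda_\star$.
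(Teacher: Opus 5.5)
Your argument follows the paper's proof step for step: Perron--Frobenius plus primitivity, removal of the leading mode via $\delta_0\mathbf{1}=0$, the Jordan decomposition $T_h^n-\mathbf{1}\mu=P\widetilde J^{\,n}P^{-1}$, and the $\ell_1$--$\ell_2$ conversion with $\|\delta_0\|_2\le 2$. The obstacle you flag is real, and it is exactly where the paper's own proof breaks. The paper bounds $\|J_i(\lambda_i)^n\|\le C_i\,n^{m_i-1}|\lambda_i|^n$ with $C_i$ ``depending only on $m_i$ and the chosen matrix norm''. In fact $C_i$ must also carry a factor $|\lambda_i|^{-(m_i-1)}$, since the corner entry of $J_i(\lambda_i)^n$ is $\binom{n}{m_i-1}\lambda_i^{\,n-m_i+1}$. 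The paper then drops $C_i$ altogether when it writes \eqref{eq:jordan_bound}.

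The rescaling will not rescue the stated constant. If you rescale each Jordan chain so that the superdiagonal becomes $\lambda_\star$, you do get $\|(\lambda I+\lambda_\star N)^n\|\le m\,n^{m-1}\lambda_\star^n$ for $n\ge1$. But the lost factor returns through the condition number of the rescaled basis, $\kappa(PD)\le\kappa(P)\,\lambda_\star^{-(m^\star-1)}$, and the middle factor is no longer the Jordan form.

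The claimed constant is in fact false once $m^\star\ge 2$. Consider the chain
\[
T=\begin{pmatrix}\tfrac12&0&\tfrac12\\ \tfrac12&0&\tfrac12\\ 0&\tfrac12&\tfrac12\end{pmatrix},\qquad \mu=\big(\tfrac14,\tfrac14,\tfrac12\big).
\]
It is irreducible and aperiodic, and $T^2=\mathbf{1}\mu\neq T$, so $(T-\mathbf{1}\mu)^2=0\neq T-\mathbf{1}\mu$. Hence $\lambda_\star=0$ and $m^\star=2$. The right-hand side of \eqref{eq:mixing_bound} is then $0$ for every $n\ge1$, for any choice of $P$, whereas $\|e_1T-e_3T\|_{TV}=\tfrac12$.

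The failure is not confined to this degenerate case. Replace $T$ by $(1-\epsilon)T+\epsilon I$. This gives $\lambda_\star=\epsilon>0$ and the same Jordan chain, with one column rescaled by $(1-\epsilon)^{-1}$, so $\kappa(P)$ stays bounded. At $n=1$ the left side equals $(1+\epsilon)/2$, while the right side is $O(\epsilon)$.

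So your fallback is the correct statement: for $\lambda_\star>0$, take $C=\sqrt{k}\,\kappa(P)\,m^\star\lambda_\star^{-(m^\star-1)}$, which reduces to $\sqrt{k}\,\kappa(P)$ when $m^\star=1$. This is more accurate than what the paper proves.

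One small addition: restrict to $n\ge1$. Your identity $T_h^n-\mathbf{1}\mu=P\widetilde J^{\,n}P^{-1}$ fails at $n=0$. Also, for $m^\star\ge2$ the bound at $n=0$ would read $\|\nu_0-\nu_0'\|_{TV}\le 0$.
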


\begin{proof}
\textbf{Step 1 (Spectral decomposition).}
Since $T_h$ is row-stochastic, irreducible, and aperiodic, the Perron--Frobenius theorem guarantees a simple leading eigenvalue $\lambda_1=1$ with left eigenvector $\mu$ (the stationary distribution, $\mu T_h = \mu$, $\mu>0$ entrywise), and all remaining eigenvalues satisfy $|\lambda_i|<1$.
Simplicity of $\lambda_1$ implies its Jordan block is exactly $1\times1$, so no polynomial correction arises from the dominant term.
Writing the Jordan decomposition $T_h = PJP^{-1}$, with $J$ block-diagonal consisting of the scalar block $[1]$ and Jordan blocks $J_i(\lambda_i)$ of size $m_i$ for $|\lambda_i|<1$, we have
\[
T_h^n = PJ^nP^{-1}, \qquad J^n = \mathrm{diag}\big(1,\ J_2(\lambda_2)^n,\dots\big).
\]
By the standard bound on powers of Jordan blocks~\cite[Ch.~3]{horn2012matrix},
\begin{equation*}
\begin{aligned}
\|J_i(\lambda_i)^n\| &\leq \sum^{m_i-1}_{j=0}\binom{n}{m_i-1}|\lambda_i|^{\,n-m_i+1} \|J_i(0)^j\|\\ &\leq C'_i\binom{n}{m_i-1}|\lambda_i|^{\,n-m_i+1}\\ &\leq C_i\, n^{m_i-1}|\lambda_i|^n
\end{aligned}
\end{equation*}
for a constant $C_i$ depending only on $m_i$ and the chosen matrix norm. Letting $m^\star=\max_i m_i$ and $\lambda_\star=\max_{i\geq2}|\lambda_i|$,
\begin{equation}
\big\|T_h^n - \mathbf{1}\mu\big\| \;\leq\; \kappa(P)\, n^{m^\star-1}\,\lambda_\star^{\,n},
\label{eq:jordan_bound}
\end{equation}
where $\mathbf{1}\mu$ is the rank-one projector onto the stationary distribution. Equation~\eqref{eq:jordan_bound} holds without assuming diagonalizability of $T_h$; when $T_h$ is diagonalizable, $m^\star=1$ and \eqref{eq:jordan_bound} reduces to a geometric bound.
Because $T_h$ is real, any complex eigenvalues occur in conjugate pairs and the Jordan decomposition, though carried out over $\mathbb{C}$, yields a real matrix $T_h^n$ for every $n$; the bound in \eqref{eq:jordan_bound}, stated in terms of eigenvalue moduli, is unaffected by this and holds as a bound on a real quantity throughout.

\textbf{Step 2 (Reduction to the difference of distributions).}
Let $\delta_0 = \nu_0-\nu_0'$. Since both $\nu_0$ and $\nu_0'$ are probability distributions, $\delta_0\mathbf{1}=0$, so $\delta_0$ has no component along the stationary projector: $\delta_0\,\mathbf{1}\mu = 0$. Hence
\[
\delta_0 T_h^n = \delta_0\big(T_h^n - \mathbf{1}\mu\big).
\]

\textbf{Step 3 (Norm conversion to total variation distance).} Using the standard inequality $\|x\|_1 \le \sqrt{k}\,\|x\|_2$ in dimension $k = |\Sigma_h|$, sub-multiplicativity of the operator norm, and $\|\delta_0\|_2 \le \|\nu_0\|_2 + \|\nu_0'\|_2 \le 2$ (each probability vector satisfies $\|\nu\|_2 \le \|\nu\|_1 = 1$),
\begin{equation*}
\begin{aligned}
\|\nu_0 T_h^n - \nu_0' T_h^n\|_{TV} = \tfrac{1}{2}\|\delta_0 T_h^n\|_1 &\le \tfrac{1}{2}\sqrt{k}\,\|\delta_0 T_h^n\|_2\\ &\le \tfrac{1}{2}\sqrt{k}\,\|\delta_0\|_2\,\|T_h^n - \mathbf{1}\mu\|_2.
\end{aligned}
\end{equation*}
Combining with Eq.~\eqref{eq:jordan_bound} and $\|\delta_0\|_2 \le 2$ yields
\[
\|\nu_0 T_h^n - \nu_0' T_h^n\|_{TV} \;\le\; \sqrt{k}\,\kappa(P)\, n^{m^\star-1}\lambda_\star^n \;=:\; C\, n^{m^\star-1}\lambda_\star^n,
\]
which is Eq.~\eqref{eq:mixing_bound}, with $C := \sqrt{k}\,\kappa(P)$ defined as an explicit, computable upper-bound constant rather than the tightest possible one: equality is approached only in the worst case $\|\delta_0\|_2 = 2$, corresponding to seed distributions $\nu_0, \nu_0'$ with disjoint support, and $C$ strictly exceeds the tight constant otherwise.

\end{proof}

\begin{corollary}[Symbolic mixing time as a memory-depth diagnostic]
\label{cor:mixing_time}
Define the mixing time $t_{\mathrm{mix}}(\varepsilon) = \min\{n : \|\nu_0 T_h^n - \mu\|_{TV} \leq \varepsilon \text{ for all } \nu_0\}$. From Eq.~\eqref{eq:mixing_bound}, solving $Cn^{m^\star-1}\lambda_\star^n\leq\varepsilon$ asymptotically gives
\begin{equation}
t_{\mathrm{mix}}(\varepsilon) \;=\; \frac{\log(C/\varepsilon)}{\log(1/\lambda_\star)} \;+\; O\!\left(\frac{\log\log(1/\varepsilon)}{\log(1/\lambda_\star)}\right),
\end{equation}
where the correction term arises from the polynomial factor $n^{m^\star-1}$ and vanishes exactly when $T_h$ is diagonalizable ($m^\star=1$).
Here $n$ counts walk steps, and, per Definition~\ref{def:optn_chain}, each walk step advances the raw-sample index by $1$ (in pattern-position units) rather than by $\tau$; the walk's effective memory horizon -- the number of steps after which the influence of the seed compatibility state becomes negligible -- is therefore $t_{\mathrm{mix}}(\varepsilon)$ pattern positions, directly computable from the spectral gap of the empirically estimated matrix $T_h$, with the compatibility-relevant window length $h=(d-1)\tau$ already absorbed into the definition of the state space $\Sigma_h$ itself (Definition~\ref{def:optn_chain}), rather than acting as an independent bound.
\end{corollary}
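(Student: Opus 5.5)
The plan is to reduce the mixing-time statement to the two-distribution bound of Proposition~\ref{prop:esp}, and then invert that bound asymptotically in $\varepsilon$.

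\textbf{Step 1 (reduction to Proposition~\ref{prop:esp}).} The quantity in the definition of $t_{\mathrm{mix}}$ compares $\nu_0T_h^n$ with $\mu$, not two seeds with each other. Since $\mu T_h^n=\mu$, I will take $\nu_0'=\mu$ in Eq.~\eqref{eq:mixing_bound}. This gives $\|\nu_0T_h^n-\mu\|_{TV}\le C n^{m^\star-1}\lambda_\star^n$ uniformly over all initial distributions $\nu_0$, because the constant $C=\sqrt{k}\,\kappa(P)$ does not depend on $\nu_0$. Hence $t_{\mathrm{mix}}(\varepsilon)\le n_0(\varepsilon)$, where $n_0(\varepsilon)$ is the least $n$ with $Cn^{m^\star-1}\lambda_\star^n\le\varepsilon$. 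I read the displayed formula as describing this bound-derived quantity $n_0(\varepsilon)$, which is what ``solving $Cn^{m^\star-1}\lambda_\star^n\le\varepsilon$'' means. Equality with the true mixing time is not claimed: the sharp rate also needs a matching lower bound from a sub-dominant eigenvector.

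\textbf{Step 2 (inverting the bound).} Take logarithms and write $L=\log(1/\lambda_\star)>0$. The condition becomes
\[
nL-(m^\star-1)\log n\ \ge\ \log(C/\varepsilon).
\]
In the diagonalizable case $m^\star=1$ this is linear in $n$, so $n_0=\lceil \log(C/\varepsilon)/L\rceil$ exactly, up to the integer rounding. That rounding contributes an $O(1)$ term, which I will absorb or state explicitly.

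For $m^\star\ge2$ I will set $n=n_1+r$ with $n_1=\log(C/\varepsilon)/L$ and bootstrap. To leading order $\log n=\log n_1+o(1)=\log\log(C/\varepsilon)-\log L+o(1)$. Substituting gives
\[
r=\frac{(m^\star-1)\log n}{L}=O\!\left(\frac{\log\log(1/\varepsilon)}{L}\right).
\]
I then check that $n=n_1+c\log\log(1/\varepsilon)/L$ satisfies the inequality for a suitable constant $c$ once $\varepsilon$ is small. The check is monotonicity: $f(n)=nL-(m^\star-1)\log n$ is increasing for $n>(m^\star-1)/L$. I also check that $n_1$ itself fails the inequality, which gives the matching lower-order term.

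\textbf{Main obstacle.} I expect the hardest part to be making the $O(\cdot)$ term rigorous uniformly: the implied constant depends on $m^\star$ and $\lambda_\star$, and $\log\log$ only makes sense for $\varepsilon$ small. I will handle this by stating the result for $\varepsilon\to0$ with $T_h$ fixed. I will also note that for $m^\star=1$ only the rounding term survives, which is the sense in which the correction ``vanishes''.

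\textbf{Step 3 (interpretation).} The last part of the corollary is interpretive. By Eq.~\eqref{eq:induced_transition_matrix}, each transition appends exactly one pattern, so $n$ steps are $n$ pattern positions. The window length $h$ affects the result only through $k=|\Sigma_h|$ inside $C$ and through the spectrum of $T_h$, not as a separate bound.
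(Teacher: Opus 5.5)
Your proposal is correct and is the paper's own argument made explicit: the paper justifies the corollary only by ``solving $Cn^{m^\star-1}\lambda_\star^n\le\varepsilon$ asymptotically'' from Eq.~\eqref{eq:mixing_bound}, which is your Steps~1--2 with $\nu_0'=\mu$, and, like you, it really obtains only an upper bound on $t_{\mathrm{mix}}(\varepsilon)$, which is also how the paper later uses it (bound $\approx 420$ versus empirical $42$). Two small touches: for $m^\star\ge 2$ the function $Cn^{m^\star-1}\lambda_\star^n$ vanishes at $n=0$ and is not monotone, so define $n_0(\varepsilon)$ as the threshold on its decreasing branch (which is what your Step~2 actually computes) rather than the literal least $n$; and the lower bound you set aside is short, since the real part of a suitably scaled left eigenvector for an eigenvalue of modulus $\lambda_\star$ sums to zero, is therefore a multiple of some $\nu_0-\nu_0'$, and yields $t_{\mathrm{mix}}(\varepsilon)\ge\log(c/\varepsilon)/\log(1/\lambda_\star)$ for some $c>0$, so for fixed $T_h$ the displayed formula does hold as an equality in its stated $O(\cdot)$ sense.
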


\begin{remark}[Finite-sample estimation error]
\label{rem:finite_sample}
Proposition~\ref{prop:esp} concerns the population-level chain with true transition matrix $T_h$. In practice, one estimates $\hat T_h$ from a finite symbolic sequence of length $N$, with each row a multinomial maximum-likelihood estimate over admissible successor compatibility states. For a state visited $N_\sigma$ times, the corresponding row of $\hat T_h$ satisfies, by the multivariate central limit theorem for multinomial proportions,
\[
\mathbb{E}\left[\|\hat T_h(\sigma,\cdot) - T_h(\sigma,\cdot)\|_2^2\right] = O(N_\sigma^{-1}),
\]
so that each row error is $O_p(N_\sigma^{-1/2})$. Assuming the empirical visitation counts are approximately balanced across the $k$ admissible states (i.e., $N_\sigma = \Theta(N/k)$ for all $\sigma$, a condition that holds approximately for the aperiodic, irreducible chains considered here once $N \gg k$), the Frobenius norm of the full perturbation matrix $E = \hat T_h - T_h$ satisfies
\[
\|E\|_F^2 = \sum_{\sigma} \|\hat T_h(\sigma,\cdot) - T_h(\sigma,\cdot)\|_2^2 = O_p\!\left(k \cdot \frac{k}{N}\right) = O_p\!\left(\frac{k^2}{N}\right),
\]
so that $\|E\|_2 \le \|E\|_F = O_p(k/\sqrt{N})$. Substituting into the Bauer--Fike bound,
\[
|\hat\lambda_\star - \lambda_\star| \;\le\; \kappa(P)\,\|E\|_2 \;=\; O_p\!\left(\kappa(P)\,\frac{k}{\sqrt{N}}\right),
\]
so that the estimated mixing time $\hat t_{\mathrm{mix}}(\varepsilon)$ computed from $\hat T_h$ is a consistent estimator of the true mixing time as $N \to \infty$, provided $N \gg k^2$. Because $k = |\Sigma_h|$ grows combinatorially with $d$ and $\tau$ (since $h=(d-1)\tau$), this estimation error grows sharply with embedding dimension and delay, reinforcing the practical trade-off already noted in Table~\ref{tab:correspondence}: larger $d$ or $\tau$ increases representational granularity at the cost of requiring substantially longer training sequences for reliable OPTN construction.
\end{remark}

\begin{remark}
Proposition~\ref{prop:esp} and Corollary~\ref{cor:mixing_time} together establish the discrete counterpart of the continuous ESP condition on the spectral radius of $W$~\cite{yan2024emerging}: in both cases, a spectral quantity of the fixed reservoir matrix (largest singular value/spectral radius for $W$; second-largest eigenvalue modulus for $T_h$) governs whether, and how fast, the reservoir's state forgets its initial or seed condition.
$\hat\lambda_\star$ for a finite, empirically estimated $\hat T_h$ is directly and exactly computable via standard eigenvalue decomposition -- requiring no heuristic search -- though it remains, per Remark~\ref{rem:finite_sample}, an estimate of the population quantity $\lambda_\star$ subject to finite-sample error.
\end{remark}

Our ergodicity–spectral-gap treatment thus specializes general fading-memory and universality results for stochastic reservoir computers and Markov-chain reservoirs~\cite{grigoryeva2018universal,gonon2019reservoir,ortega2026stochastic} to the particular case where the transition matrix arises from an OPTN.

\section{Illustrative Example}

To ground the correspondence concretely, consider a minimal case with embedding dimension $d=3$ (giving $3!=6$ possible ordinal patterns) and history length $h=d-1=2$, applied to a short noisy oscillatory signal.
Table~\ref{tab:toy} lists the six individual ordinal patterns together with illustrative pattern-to-pattern transition weights; these weights are the entries of $T$ and determine, via Definition~\ref{def:optn_chain}, the transition probabilities of the history-augmented chain on pairs $\sigma=(\pi_n,\pi_{n+1})\in\Sigma_2$.


\begin{table}[h]
\centering
\caption{Toy OPTN for $d=3$ and $\tau>1$. Transition weights (illustrative)
shown here are entries of the pattern-level matrix $T$; the formal walk of
Definition~\ref{def:optn_chain} operates on the induced chain over history
pairs $\Sigma_2$, not on single patterns directly. Right: graphical
illustration of the corresponding transition network.}
\label{tab:toy}

\begin{minipage}[c]{0.68\columnwidth}
\centering
\begin{tabular}{@{}cll@{}}
\toprule
Pattern ID & Rank order & Outgoing transitions \\
\midrule
0 & $(0,1,2)$  & $\to 1$ (0.6), $\to 4$ (0.4) \\
1 & $(0,2,1)$        & $\to 3$ (0.7), $\to 0$ (0.3) \\
2 & $(1,0,2)$        & $\to 5$ (0.5), $\to 2$ (0.5) \\
3 & $(1,2,0)$        & $\to 4$ (0.8), $\to 1$ (0.2) \\
4 & $(2,0,1)$        & $\to 2$ (0.6), $\to 3$ (0.4) \\
5 & $(2,1,0)$ & $\to 0$ (0.5), $\to 5$ (0.5) \\
\bottomrule
\end{tabular}
\end{minipage}%
\hspace{0.04\columnwidth}%
\begin{minipage}[c]{0.25\columnwidth}
\centering
\includegraphics[width=1.1\linewidth]{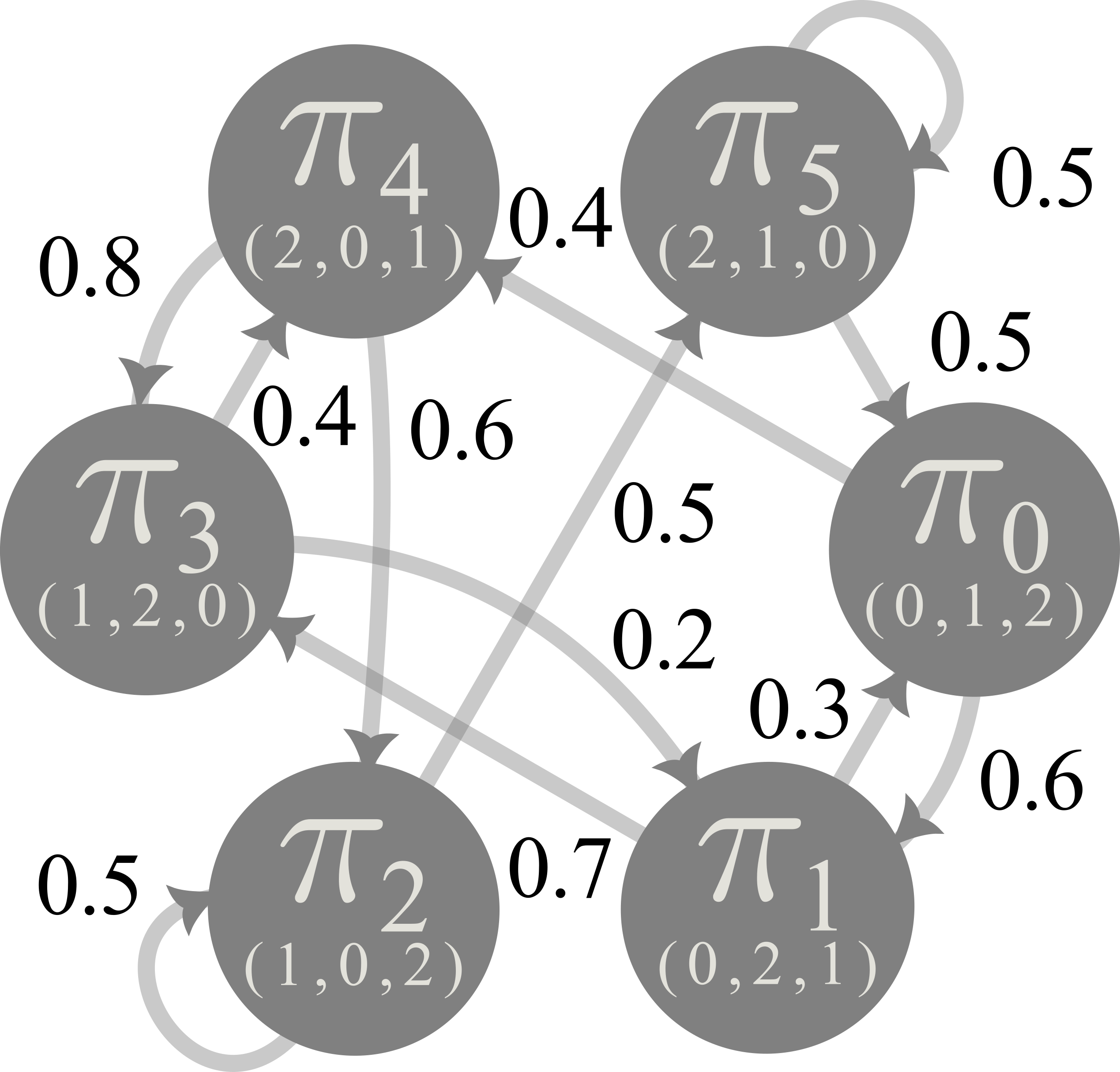}
\end{minipage}

\end{table}

A walk seeded at the history pair $\sigma_0=(0,1)$ (the ``driven'' phase, analogous to washout) evolves by appending one new pattern at a time and dropping the oldest: e.g., $(0,1)\to(1,3)\to(3,4)\to(4,2)\to(2,5)\to(5,0)\to\dots$, where each appended pattern is sampled from the outgoing weights of the \emph{most recent} pattern in the pair (Table~\ref{tab:toy}), conditionally renormalized over the successor patterns that satisfy the ordinal-compatibility requirement.
Reading this sequence of history pairs is functionally equivalent to reading off the successive internal states $x(t)$ of a conventional reservoir; the only difference is that these states are drawn from a finite, labeled alphabet $\Sigma_2$ rather than lying in a continuous vector space.

\section{OPTNs are not exactly RC}
\label{sec:partial}

Table~\ref{tab:correspondence} makes the correspondence explicit, but two structural asymmetries limit how far the analogy can be pushed.

\textbf{State-space and substrate origin.} Classical RC states occupy a continuous, typically high-dimensional real vector space, allowing arbitrarily fine gradations of representation, and the reservoir matrix $W$ is generated randomly, a design choice motivated by universal approximation results showing that sufficiently large random reservoirs can approximate broad classes of fading-memory filters without task-specific tuning~\cite{yan2024emerging,grigoryeva2018echo}.
This random-generation convention is not without alternatives, however: recent work has shown that task-specific, non-random wiring grown directly from a network's own activity statistics can outperform randomly initialized reservoirs~\cite{cazalets2026reshaping}.
The OPTN walk's state space $\Sigma_h$ is discrete and combinatorially bounded by $(d!)^h$, which caps representational granularity and, for large $d$, runs into severe data-sparsity problems in estimating $T_h$ reliably; and its substrate is \emph{directly inferred} from a specific training signal rather than randomly generated, making it intrinsically specialized to the data it was built from. This is a strength for signal-specific generative applications but a limitation for tasks requiring generalization across signal classes, an area where random RC reservoirs (and their associated universal approximation guarantees~\cite{grigoryeva2018echo,gonon2023approximation,ortega2026stochastic} currently have the stronger theoretical footing~\cite{yan2024emerging}.

\textbf{Operating mode and readout.} Most RC applications -- forecasting, classification, control -- are fundamentally driven: an external signal is continuously supplied, and the readout is trained against a task-specific target~\cite{lukovsevivcius2009reservoir,jaeger2004harnessing,yan2024emerging}. The OPTN walk, once past its brief seeding phase, operates entirely autonomously with no external driving signal, placing it closer -- in a conceptual sense -- to the ``digital twin'' mode of RC in which the reservoir evolves as a self-contained proxy for the original dynamical system~\cite{kong2023reservoir,lai2024digital,sun2026reservoir} than to conventional supervised RC pipelines. Section~\ref{sec:empirical} demonstrates that a simple, fixed (untrained) mapping from symbolic state to predicted
successor ordinal pattern is sufficient for one-step-ahead prediction, providing a first, minimal instance of a symbolic readout; whether a genuinely \emph{trained} readout -- e.g., a ridge or multinomial logistic regression over the history state -- can be formulated with efficiency guarantees comparable to Tikhonov-regularized linear RC readouts remains open, and is a natural direction for future work (Section~\ref{sec:discussion}).

\section{Empirical Illustration: One-Step-Ahead Prediction}
\label{sec:empirical}

Section~\ref{sec:fading_memory} established the OPTN walk's fading-memory property in terms of the spectral gap of the induced Markov chain $T_h$.
We now complement this analytical result with a minimal empirical probe of the walk's local predictive structure: given the true state of the walk at time $t$, how reliably does a symbolic prediction rule, built directly on the OPTN, recover the true successor ordinal pattern at $t+1$?
This section develops the evaluation methodology first, then reports results across five systems of increasing dynamical complexity -- from strictly periodic, over quasi-periodic, over chaotic (time-continous and discrete), to stochastic.
We emphasize that our purpose here is illustrative rather than exhaustive: we aim to demonstrate that the theoretical correspondence developed above translates into a concrete, testable procedure that mirrors the classical RC evaluation paradigm -- read out a prediction from held-out data using only the fixed substrate -- not to present a complete or optimized forecasting algorithm, and we deliberately restrict the empirical scope to one-step-ahead prediction to keep this demonstration self-contained and interpretable.

\subsection{Prediction rule}
\label{sec:pred_rule}

We define a deterministic symbolic predictor $\Phi$ that, given the walk's current state (the most recently visited pattern $\pi_t$ together with its compatibility-relevant history, Section~\ref{sec:walk}), returns a single most-likely successor pattern,
\begin{equation}
\hat\pi_{t+1} \;=\; \Phi(\pi_t; \, \text{history}) \;=\; \arg\max_{\pi' \,\in\, \mathcal{C}(t)} \hat T_h(\sigma_t, \sigma'),
\label{eq:pred_rule}
\end{equation}
where $\mathcal{C}(t)$ is the admissible set of candidate successor ordinal patterns defined in Section~\ref{sec:pred_eval}, $\sigma_t$ is the walk's current history state, $\sigma'$ is the history state obtained by appending candidate $\pi'$, and ties are broken uniformly at random.
Here $\hat T_h$ is the empirical counterpart of the compatibility-conditioned transition matrix in Eq.~\eqref{eq:induced_transition_matrix}, estimated from training counts and row-normalized over admissible successor compatibility states.
Concretely, $\Phi$ selects the admissible successor with the highest empirical transition probability under the OPTN's own estimated transition matrix $\hat T_h$: it is a fixed, parameter-free rule requiring no training beyond the OPTN's own empirically estimated transition counts, and no parameters beyond those already fixed by the substrate itself.
This is deliberately the simplest member of a broader family of symbolic predictors that could be built on the OPTN's transition structure -- for instance, a rule sampling proportionally to $\hat T_h$ rather than taking the arg max, or a trained parametric readout (Section~\ref{sec:discussion}) -- and we focus on this minimal instance here to isolate the substrate's own predictive structure from any effect of a more elaborate readout design.
Viewed in isolation, the predictor is a maximum-likelihood estimator on a higher-order Markov chain, which however can now be interpreted with regard to the history-augmented symbolic dynamics as a fixed reservoir whose structural and memory properties can be analysed with RC concepts.

\subsection{One-step-ahead prediction and evaluation metrics}
\label{sec:pred_eval}

To assess how well a symbolic prediction rule built on the OPTN captures the local transition structure of a signal, we evaluate one-step-ahead prediction: given the walk's true state at time $t$ (the current pattern $\pi_t$ together with its compatibility-relevant history), we compute the predicted successor $\hat\pi_{t+1}=\Phi(\pi_t;\text{history})$ and compare it against the true observed ordinal pattern $\pi_{t+1}$ in a held-out test segment.
This is evaluated at many randomly sampled positions across the test segment to obtain distributional estimates of prediction quality.
We quantify agreement between $\hat\pi_{t+1}$ and $\pi_{t+1}$ using complementary statistics, together with a chance-level reference against which they must be interpreted:

\begin{itemize}
\item \textbf{Exact accuracy}: the fraction of sampled test positions for which $\hat\pi_{t+1}=\pi_{t+1}$ exactly.
\item \textbf{Mean/median permutation distance}: since ordinal patterns are permutations in $S_d$, we quantify near-misses via $\rho(\hat\pi,\pi) = 1-\tau_K(\hat\pi,\pi)$, where $\tau_K \in [-1,1]$ is Kendall's rank correlation~\cite{kendall1938} between the two permutations. This distance vanishes for identical patterns, equals 1 in expectation for two independent (uncorrelated) patterns, and can exceed 1 (up to a maximum of 2) when predicted and true patterns are systematically \emph{reversed} relative to one another.
\item \textbf{Near-miss fraction}: the fraction of sampled positions with $\rho(\hat\pi,\pi)\leq0.3$, capturing predictions that are ordinally close without being exact.
\item \textbf{Corrected chance baseline}: rather than comparing exact accuracy against the naive uniform baseline $1/d!$ -- which implicitly assumes every one of the $d!$ ordinal patterns is a legal successor to any given state -- we compute, at each test position $t$, the size of the true \emph{admissible} candidate set $\mathcal{C}(t)$: the subset of patterns that are both graph-reachable from $\pi_t$ in the OPTN and ordinally compatible with the current history window (Section~\ref{sec:walk}). The corresponding chance-level accuracy at position $t$ is then $1/|\mathcal{C}(t)|$, and we report its average over the same sampled test positions used for the other three statistics. Because $|\mathcal{C}(t)|$ is generally far smaller than $d!$ and varies with $(d,\tau)$ and with the system's own dynamics, this corrected baseline -- rather than the naive uniform figure -- is the appropriate reference for judging whether a given exact accuracy or near-miss fraction reflects genuine predictive structure or merely the sparsity of the OPTN substrate itself.
\end{itemize}

\subsection{Model systems}
\label{sec:systems}

We evaluate the procedure of Sections~\ref{sec:pred_rule}--\ref{sec:pred_eval} on five systems \tcom{(Figure~\ref{fig:example_series})} chosen to span distinct mechanisms by which one-step-ahead predictability can, in principle, degrade, rather than a single linear scale of "regularity":

\begin{enumerate}
\item \textbf{Sine wave}: a purely periodic, single-frequency signal,
\begin{equation*}
x(t) = \sin(2\pi f t + \phi_0),
\end{equation*}
included as the strict-regularity limiting case. Because the signal is generated by a single fixed frequency $f= 5 \mathrm{Hz}$, the relationship between delay $\tau$ and the signal's period is exactly known, making this system a natural reference point for isolating sampling-related effects from genuinely dynamical ones.

\item \textbf{Coupled FitzHugh--Nagumo (FHN) oscillators}: a network of $N$ diffusively coupled, heterogeneous FitzHugh--Nagumo oscillators~\cite{fitzhugh1961impulses},
\begin{align*}
\dot{x}_i &= x_i(a-x_i)(x_i-1) - y_i + \frac{\kappa}{N-1}\sum_{j=1}^{N} A_{ij}(x_j - x_i), \\
\dot{y}_i &= b_i x_i - c y_i,
\end{align*}
with $A_{ij}$ being a random coupling structure, $N=100$, $\kappa=0.23$, $a=-0.02651$, $c = 0.02$ and $bi = 0.006 + \frac{i-1}{N-1}0.008$ ($0.006 \leq b_i\leq0.014 \forall i$).

It is a paradigmatic model for neural dynamics known to support rich collective behavior including extreme events and self-induced switching between multiple space-time patterns~\cite{ansmann2013extreme,mahata2025learning}.
The observable used here is the ensemble average $\langle x \rangle = \tfrac{1}{N}\sum_i x_i$.
We note explicitly that the qualitative regime realized by a given parameterization -- smooth and quasi-periodic, versus intermittently punctuated by rare, large-amplitude extreme events -- is a property of the specific operating point examined rather than of the FHN network in general, which is capable of substantially more irregular dynamics elsewhere in its parameter space.

\item \textbf{Lorenz system}: a canonical low-dimensional chaotic flow~\cite{lorenz1963deterministic},
\begin{align*}
\dot{x} &= \eta(y-x), \\
\dot{y} &= x(\xi - z) - y, \\
\dot{z} &= xy - \zeta z,
\end{align*}
with $\eta = 10$,
$\xi = 28$, $\zeta = 8/3$, used to probe predictability under genuine deterministic chaos -- sensitive dependence on initial conditions -- as distinct from periodicity, sampling artifacts, or injected noise. As a continuous-time flow, it additionally allows us to examine whether the numerical proximity of consecutive samples at small $\tau$, a property not shared by the discrete-time systems below, leaves a distinguishable signature in the OPTN-based prediction statistics. We chose $y$ as the observable for further investigations.

\item \textbf{Ikeda map}: a canonical low-dimensional discrete-time chaotic map~\citep{ikeda1979multiple},
\begin{align*}
x_{n+1} &= 1+\upsilon(x_n\cos t_n-y_n\sin z_n),\\
y_{n+1} &= \upsilon(x_n\sin t_n+y_n\cos z_n),\\
z_n &= 0.4-\frac{6}{1+x_n^2+y_n^2},
\end{align*}
with $\upsilon=0.9$, included to isolate discrete-time, one-shot chaotic iteration from the continuous-flow chaos of the Lorenz system above. Unlike a flow, the map possesses no intrinsic smoothness between successive iterates, allowing us to test whether continuous- and discrete-time deterministic chaos leave comparable or distinct signatures under the same evaluation procedure.

\item \textbf{NARMA10}: a standard nonlinear autoregressive moving-average benchmark of order 10,
\begin{equation}
y_{n+1} = \alpha\, y_n + \beta\, y_n \sum_{i=0}^{9} y_{n-i} + \gamma\, g_{n-9} g_n + \epsilon,
\end{equation}
with $g_n$ an exogenous i.i.d.\ input drawn from $\mathcal{U}(0, 0.5)$ and $(\alpha=0.3,\beta=0.05,\gamma=1.5,\epsilon=0.1)$ standard literature constants, widely used in the reservoir computing literature itself~\cite{atiya2000new,yan2024emerging} as a synthetic task with strong, but nonlinear and higher-order, temporal dependence driven by an exogenous stochastic input at every step. Its inclusion connects this evaluation directly to RC's own benchmarking conventions and provides the genuinely stochastic-driven counterpart to the four deterministic systems above.
\end{enumerate}

Each systems derived time-series had $20000$ data points, after excluding $2000$ data points as transient.
For each system, we construct ordinal-pattern sequences using a fixed embedding dimension $d$ and delay $\tau$, split each sequence into training and held-out test segments (70/30), and evaluate one-step-ahead prediction at many randomly sampled positions in the test segment, reporting mean and standard deviation across these samples.

\begin{figure}[htbp]
\centering
\includegraphics[width=\columnwidth]{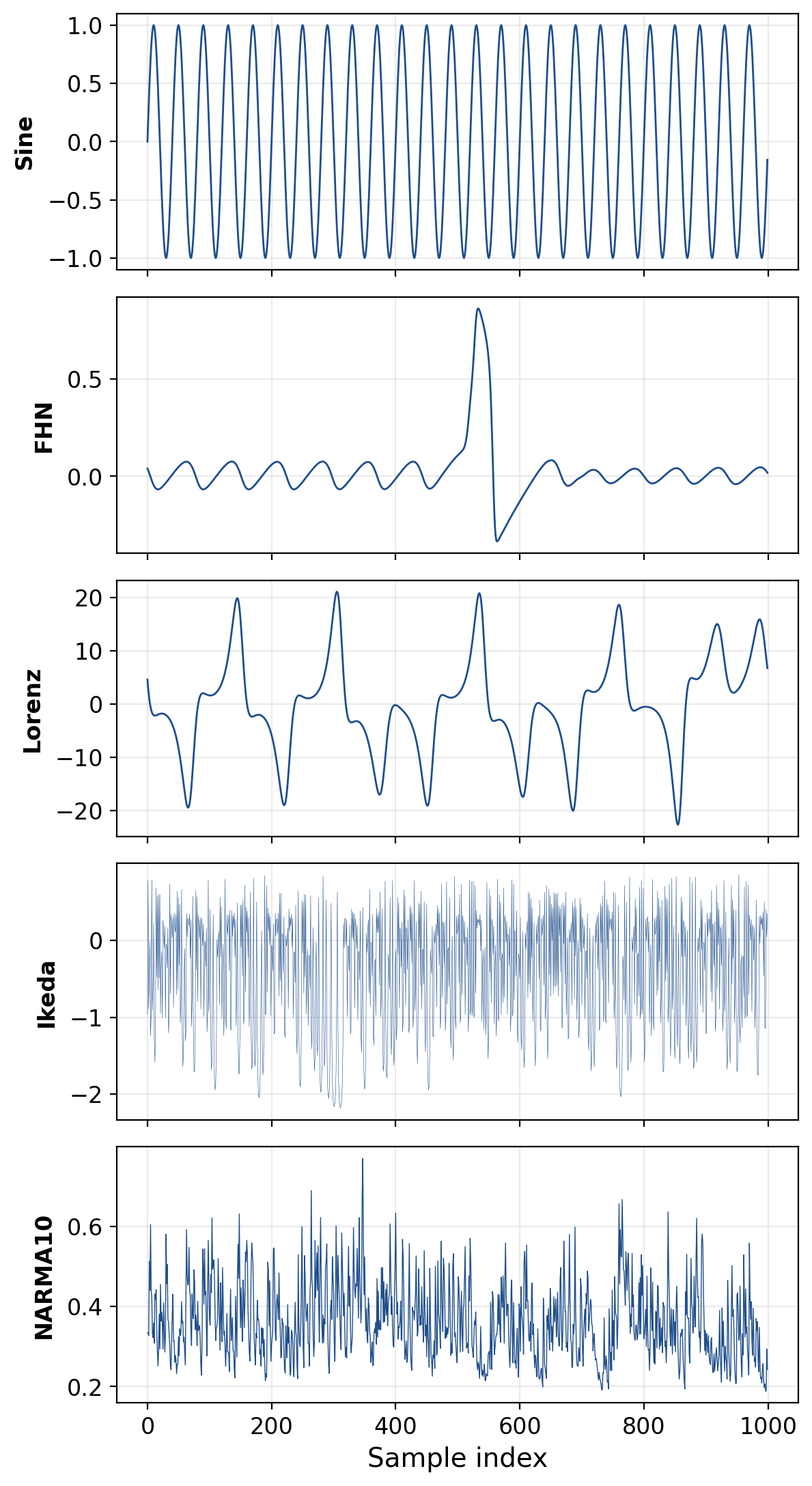}
\caption{Representative time-series segments for the five systems studied (Section~\ref{sec:systems}), each shown over a short window of 1000 consecutive samples. From top to bottom: the Sine wave's strict single-frequency periodicity; the coupled FHN network's ($x$-component) ensemble-averaged quasi-periodic oscillation, including one visible large-amplitude excursion consistent with an extreme event; the Lorenz system's characteristic chaotic switching between attractor lobes ($y$-component); the Ikeda map's aperiodic, sharply fluctuating trajectory ($y$-component) characteristic of discrete-time chaos in a two-dimensional attractor; and NARMA10's irregular, exogenously-driven fluctuation.}
\label{fig:example_series}
\end{figure}

\subsection{Results}
\label{sec:results}

Figure~\ref{fig:sweep_grid} summarizes one-step-ahead prediction performance across all five systems (Section~\ref{sec:systems}, from here on out also referred to as Sine, FHN, Lorenz, Ikeda, NARMA10), embedding dimensions $d \in \{3,4,5,6\}$, and delays $\tau \in \{1,\dots,100\}$, reporting exact accuracy, mean permutation distance, near-miss fraction, and the corrected chance baseline (Section~\ref{sec:pred_eval}) as the mean and standard deviation over ten realizations per system. We interpret these results not as a forecasting benchmark in their own right, but as an empirical probe of the fading-memory and substrate-structure properties established analytically in Sections~\ref{sec:central_correspondence}--\ref{sec:fading_memory}.

\begin{figure*}[t]
\centering
\includegraphics[width=\textwidth]{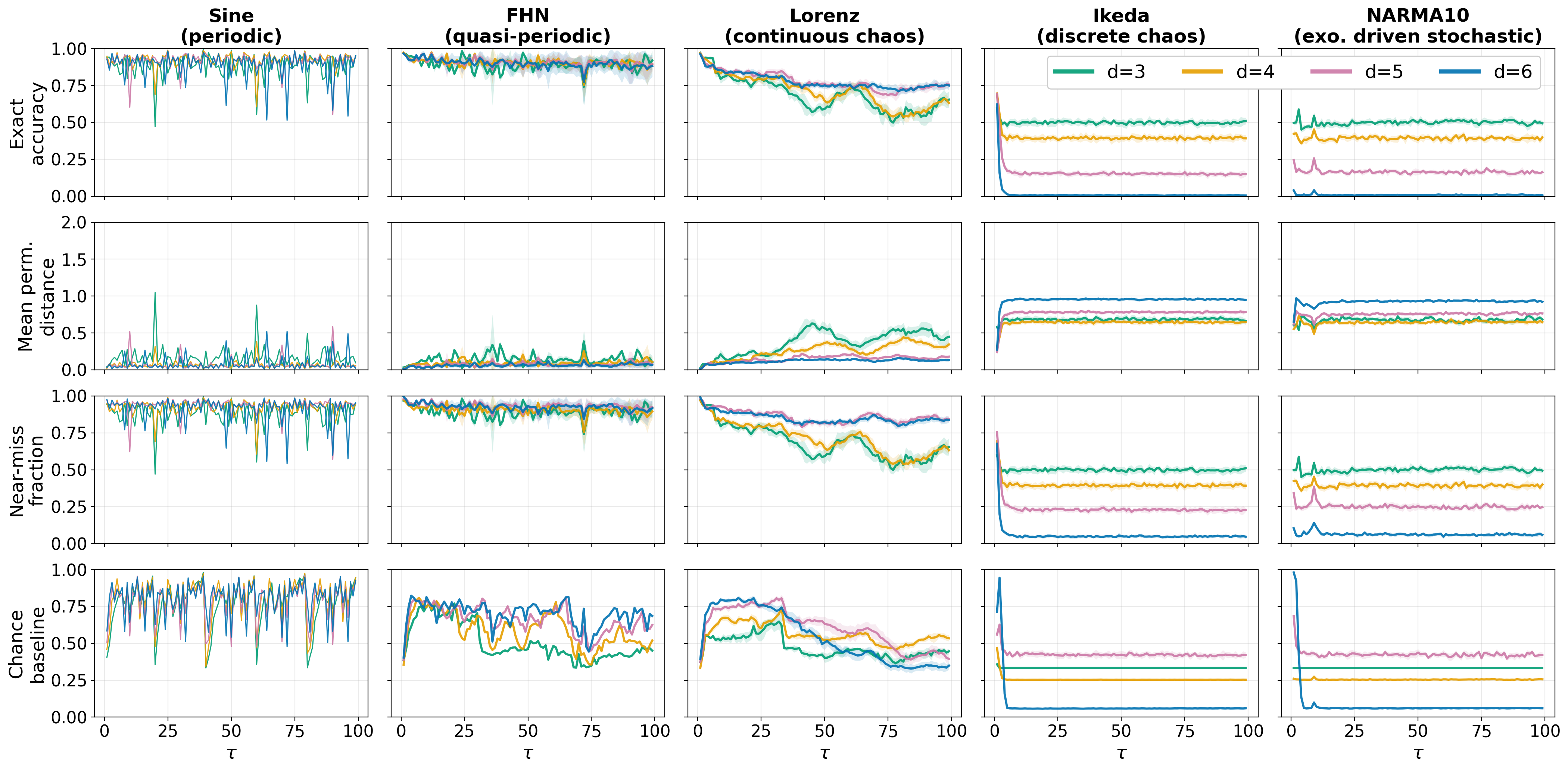}
\caption{One-step-ahead prediction statistics as a function of delay $\tau$, for embedding dimensions $d \in \{3,4,5,6\}$ (colors), across the five systems introduced in Section~\ref{sec:systems} (columns, left to right: Sine, FHN, Lorenz, Ikeda, NARMA10). Rows, top to bottom: exact accuracy, mean permutation distance, near-miss fraction, and the corrected chance baseline. Shaded bands denote $\pm 1$ standard deviation over ten realizations.}
\label{fig:sweep_grid}
\end{figure*}

\paragraph*{Regular and quasi-periodic dynamics yield near-ceiling accuracy.}
For the sine wave and the coupled FHN oscillators, both exact accuracy (typically exceeding $0.85$) and near-miss fraction remain near their ceiling values across the full range of delays examined, with little dependence on the embedding dimension $d$.
This is expected on theoretical grounds: the OPTN transition structure induced by strictly or quasi-periodic dynamics is highly constrained (Table~\ref{tab:correspondence}), so the history-augmented Markov chain of Definition~\ref{def:optn_chain} visits a small, predictable subset of $\Sigma_h$, and one-step-ahead prediction from the empirical transition table succeeds with high reliability.
A notable qualitative difference distinguishes the two systems: the sine wave shows sharp, non-monotonic decreases in accuracy at delays approaching rational fractions of its period, whereas the FHN oscillators show smoother variation with $\tau$.
This difference demonstrates that a high accuracy value alone does not identify a unique underlying mechanism; the corrected chance baseline (bottom row of Figure~\ref{fig:sweep_grid}) is necessary to disambiguate the two cases, since the sine wave's baseline varies in step with its accuracy -- an aliasing artifact of the substrate itself -- whereas the FHN oscillators' baseline varies far more smoothly, consistent with genuine, non-aliased dynamical structure.
This observation supports a general methodological point for OPTN-based random walks: accuracy values cannot be interpreted independently of the corrected chance baseline, because the OPTN substrate itself, not only the prediction rule, carries a delay- and dimension-dependent inductive bias.

\paragraph*{Lorenz dynamics show a memory-depth signature consistent with continuous-flow chaos.}
For the Lorenz system, exact accuracy decreases with increasing $\tau$, from approximately $0.9$ at small $\tau$ to approximately $0.5$--$0.6$ at $\tau = 100$, with larger embedding dimensions $d$ yielding consistently higher accuracy than smaller ones.
In the framework of Section~\ref{sec:fading_memory}, this behavior is qualitatively consistent with Proposition~\ref{prop:esp}: as $\tau$ increases, the correlation between consecutive embedding windows decreases, reducing the extent to which the history-augmented state constrains the distribution of the next pattern, and driving the induced chain $T_h$ toward a larger spectral gap and correspondingly shorter memory horizon, though we emphasize that the observed accuracy curve reflects the joint influence of the spectral gap together with transition sparsity, finite-sample estimation variance, and the chosen embedding dimension, rather than the spectral gap in isolation. The gradual decrease in exact accuracy with $\tau$, rather than an abrupt loss of accuracy, is itself informative: it reflects the continuous-flow property that samples separated by a small time delay remain numerically close, so the OPTN's edge structure degrades incrementally rather than discontinuously.
To our knowledge, this is the first empirical demonstration that the ESP-analogue spectral-gap diagnostic (Corollary~\ref{cor:mixing_time}) qualitatively tracks a system's known dynamical property -- sensitive dependence on initial conditions -- rather than reflecting an artifact of the symbolic encoding procedure.

To make this diagnostic concrete, we compute $\hat T_h$ directly from the Lorenz series at $d=4,\tau=5$ ($k=127$ recurrent history states), obtaining $\hat\lambda_\star=0.9052$ and spectral gap $\hat\gamma=0.0948$.
The closed-form bound of Corollary~\ref{cor:mixing_time}, evaluated using the constant $C=\sqrt{k}\hat\kappa(P)$, yields $t_{\mathrm{mix}}(0.05)\approx420$. Directly iterating $\hat T_h$ from 30 randomly sampled pairs of one-shot seed histories and measuring the resulting total-variation distance yields an empirical mixing time of $42$ steps, approximately one order of magnitude smaller (Figure~\ref{fig:empirical_mixing}).
This discrepancy reflects the numerical ill-conditioning of the eigenvector matrix $P$ for a sparse, empirically estimated stochastic matrix, which inflates $C$ substantially beyond the value realized in practice, rather than indicating a failure of the underlying spectral-gap mechanism: the exponential decay rate $\hat\lambda_\star^{\,n}$ correctly describes the asymptotic slope of the empirical decay curve, while the multiplicative prefactor $C$ does not meaningfully constrain the walk's practical memory depth.
We therefore recommend reporting the directly simulated empirical mixing time, together with the spectral gap $\hat\gamma$, as the more reliable diagnostics of memory depth in practice, and treat the closed-form bound of Corollary~\ref{cor:mixing_time} as qualitatively but not quantitatively informative.

\begin{figure}[htbp]
\centering
\includegraphics[width=\columnwidth]{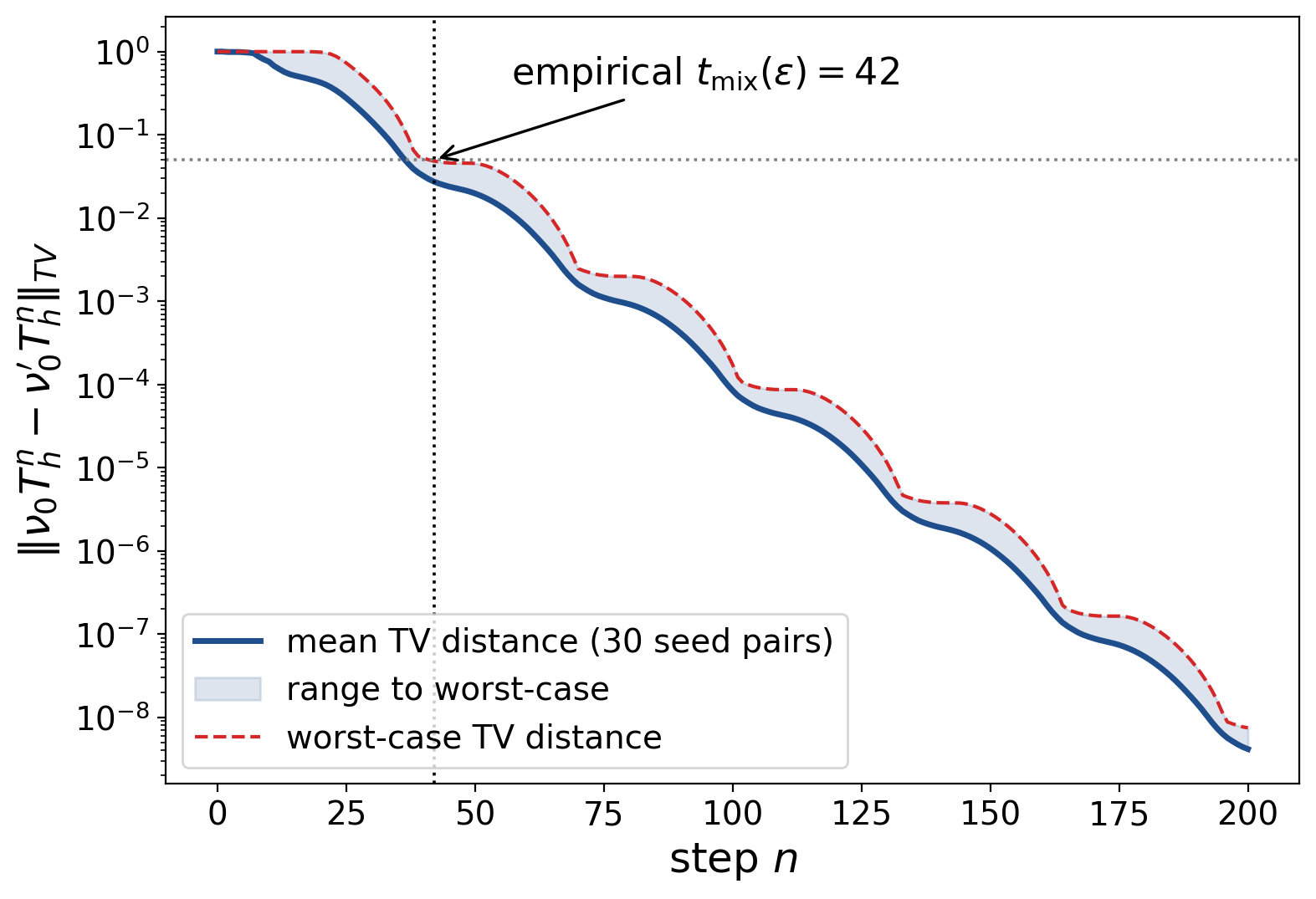}
\caption{Empirically simulated fading memory for the Lorenz system ($d=4,\tau=5$): total-variation distance $\|\nu_0 T_h^n-\nu_0'T_h^n\|_{TV}$ between pairs of one-shot seed histories, computed by direct iteration of the empirically estimated transition matrix $\hat T_h$ (30 random seed pairs; shaded band spans the mean to the worst-case value). The empirical mixing time $t_{\mathrm{mix}}(0.05)=42$ is approximately one order of magnitude smaller than the theoretical bound from Corollary~\ref{cor:mixing_time} ($\approx420$), reflecting the numerical looseness of the constant $C$ rather than a failure of the spectral-gap mechanism.}
\label{fig:empirical_mixing}
\end{figure}

\paragraph*{Discrete-time chaos and exogenous stochasticity produce a distinct, $\tau$-independent signature.}
For the Ikeda map and NARMA10, all four evaluation metrics are nearly indistinguishable between the two systems and differ substantially from those of the Lorenz system: exact accuracy, near-miss fraction, and the corrected chance baseline each saturate to a $\tau$-independent plateau within the first few delay steps, with the plateau value depending strongly on the embedding dimension $d$ but not on $\tau$ thereafter.
This result is notable because the two systems have unrelated generating mechanisms -- a deterministic one-dimensional map versus an exogenously driven stochastic autoregressive process -- and it bears directly on the correspondence developed in Table~\ref{tab:correspondence}.
We had hypothesized, on the grounds that both the Ikeda map and the Lorenz system are chaotic, that the Ikeda map's predictability profile would resemble that of the Lorenz system. The results instead indicate that OPTN substrate structure is determined not by whether the generating process is deterministic or stochastic, but by whether successive samples retain \emph{flow-like continuity}.
Continuous-time flows (Lorenz, FHN) preserve short-range ordinal correlations across small $\tau$, allowing the induced Markov chain $T_h$ to retain informative structure over a nontrivial range of delays; discrete-time, one-step iterated maps and exogenously driven stochastic processes both eliminate this continuity almost immediately, independent of whether the underlying generating rule is deterministic.
In reservoir-computing terms, this suggests that the effective memory capacity of the OPTN substrate is governed less by the entropy rate of the source signal than by its sample-to-sample regularity -- an empirically testable refinement, not previously reported to our knowledge, of what "predictability" means for a data-derived, symbolic reservoir.

\paragraph*{Chance-corrected accuracy is required to interpret exact accuracy meaningfully.}
Across all five systems, raw exact accuracy cannot be interpreted independently of the corrected chance baseline defined in Section~\ref{sec:pred_eval}.
For the Ikeda map and NARMA10 at low $d$, observed accuracy exceeds the corresponding baseline only modestly, whereas for the sine wave and FHN oscillators the OPTN's high connectivity raises the corrected baseline well above the naive uniform value of $1/d!$, so a moderate accuracy value can still represent a substantial improvement over chance.
We consider this baseline correction a necessary companion diagnostic for any subsequent work reporting accuracy for OPTN-based symbolic random walks, analogous to the role of null models in conventional reservoir-computing memory-capacity studies.
\tcom{We reiterate that these one-step-ahead prediction results are intended to probe the fading-memory and substrate-structure properties established analytically in Sections III–IV, rather than a competitive forecasting benchmark against optimized, task-tuned reservoir-computing architectures.}

\section{Discussion}
\label{sec:discussion}

We have argued that reservoir computing does not require a continuous state space: constrained random walks on ordinal pattern transition networks satisfy its core architectural and dynamical criteria, namely a fixed nonlinear substrate (Table~\ref{tab:correspondence}), a driven-to-autonomous operating mode consistent with RC-based digital twins \cite{kong2023reservoir}, and a fading-memory property with a precise discrete analogue in Markov chain ergodicity (Section~\ref{sec:fading_memory}, Proposition~\ref{prop:esp}).
Section~\ref{sec:empirical} further shows that even a simple, parameter-free maximum-likelihood rule suffices to perform a canonical RC task, one-step-ahead prediction, using only the fixed symbolic dynamics generated by the random walk.
This finding places OPTN-based frameworks within the established reservoir-computing framework, rather than treating them as an isolated symbolic-dynamics technique.
It also suggests that RC tools such as mixing-time estimates, memory-capacity measures, and spectral-gap analysis may be useful in ordinal-pattern-based analysis of complex dynamics.

Conversely, the correspondence suggests a promising direction for RC itself: reservoirs whose internal coupling structure is \emph{inferred from data} or even the \emph{dynamics of the underlying system} rather than randomly generated or physically fixed~\cite{koster2023data,geier2025dynamics}.
Existing calls within the RC community to explore ``structured\ldots non-random weights'' \cite{yan2024emerging} and to develop reservoirs with improved interpretability \cite{yan2024emerging} point toward reservoirs of different topologies~\cite{kawai2019small,houben2026reservoir} and exactly the kind of proposed data-derived, symbolic substrate.
Because the OPTN is built directly from empirically observed ordinal statistics, its structure is inherently interpretable -- each node corresponds to a named ordinal ordering of the dynamics -- and is therefore dynamics-informed in a way that random reservoir matrices are not.

Several open questions follow directly from this framing. Does the OPTN's ergodicity-based memory bound (Proposition~\ref{prop:esp}) relate in a simple and computable way to established RC memory-capacity measures?
Extending the fixed, parameter-free predictor of Section~\ref{sec:pred_rule} to a genuinely \emph{trained} readout -- for instance, a ridge or multinomial logistic regression over the history-augmented state $\Sigma_h$ -- is a natural and tractable next step, and would let us ask whether such an OPTN-derived readout inherits training guarantees comparable to Tikhonov-regularized linear readouts \cite{yan2024emerging}.

A related question concerns what the OPTN walk preserves beyond one-step-ahead accuracy: trained continuous reservoirs have been shown to remain topologically conjugate to the chaotic systems they are trained on, reproducing invariant geometric and statistical measures -- such as correlation dimension and multiscale entropy -- even as pointwise
trajectory accuracy degrades over longer horizons~\cite{chen2020mapping}. Whether an OPTN-based symbolic walk analogously preserves such invariants as one-step accuracy falls with increasing delay $\tau$ (Section~\ref{sec:results}) is untested here and would provide a more stringent assessment of the correspondence developed in Section~\ref{sec:central_correspondence} than one-step prediction accuracy alone.

Do other symbolic-dynamics-based approaches, such as recurrence networks and coarse-grained partition transition graphs admit similar reservoir-computing interpretations, suggesting a broader family of ``symbolic reservoirs" beyond the ordinal-pattern case considered here?
We view the present perspective as establishing the conceptual vocabulary needed to pursue these questions, with the construction of explicit trained readout mechanisms and their application to concrete time-series generation and surrogate or proxy data tasks left as directions for future work.
\vspace{-4mm}
\begin{acknowledgments}
The authors would like to thank Klaus Lehnertz, Manuel Adams, Simon Lehmann and
Bulcs\'u S\'andor for insightful and fruitful discussions on earlier
versions of the manuscript. The authors also thank the organizers of the \textit{Half a Century of the Kuramoto Model: Explaining Emergent Order in Time and Space} conference--Edmilson Roque dos Santos, Simona Olmi, Marc Timme, and Michael Rosenblum-- and the Wilhelm and Else Heraeus Foundation for providing a forum that enabled the authors based in Japan and Germany to connect and establish this collaboration.
Timo Br\"ohl and Max Potratzki gratefully acknowledge support and funding by the TRA Modelling (University of Bonn) as part of the Excellence Strategy of the federal and state governments.
\end{acknowledgments}

\section*{Author Declarations}
\subsection*{Author Contributions}
\textbf{Max Potratzki}: Conceptualization, Formal analysis, Investigation, Methodology, Software, Validation, Writing -- original draft, Writing -- review \& editing.
\textbf{Thomas G. de Jong}: Formal analysis, Validation (mathematics), Writing -- review \& editing.
\textbf{Michael Small}: Conceptualization,  Validation, Writing -- review \& editing.
\textbf{Timo Br\"ohl}: Conceptualization, Formal analysis, Funding acquistion, Investigation, Methodology, Project administration, Software, Supervision, Visualization, Writing -- original draft, Writing -- review \& editing.

\subsection*{Conflict of Interest}
The authors have no conflicts of interest to disclose.

\section*{Data Availability}
No new data were generated.

\bibliographystyle{aipnum4-1}
\bibliography{new}

\end{document}